\newcommand*{\mmax}[1]{\mu_{\max}\left(#1\right)}
\newcommand*{\lmax}[1]{\lambda_{\max}\left(#1\right)}
\DeclareMathOperator{\obs}{\mathcal{O}}
\DeclareMathOperator{\state}{\mathcal{S}}
\DeclareMathOperator{\poly}{\mathcal{P}}
\DeclareMathOperator{\effect}{\mathcal{E}}
\DeclareMathOperator{\oA}{\mathsf{A}}
\DeclareMathOperator{\oB}{\mathsf{B}}
\newcommand{\aff}{\mathrm{aff}}
\newcommand{\ext}{\mathrm{ext}}
\newcommand{\conv}{\mathrm{conv}}
\newcommand{\real}{\mathbb R} 
\newcommand{\complex}{\mathbb C} 
\newcommand{\nat}{\mathbb N} 
\newcommand{\hi}{\mathcal{H}} 
\newcommand{\lh}{\mathcal{L(H)}} 
\newcommand{\lsh}{\mathcal{L}_s(\mathcal{H})} 
\newcommand{\eh}{\mathcal{E(H)}} 
\newcommand{\ket}[1]{|#1\rangle} 
\newcommand{\kb}[2]{|#1\rangle\!\langle#2|} 
\newcommand{\no}[1]{\left\|#1\right\|} 
\newcommand{\tr}[1]{\mathrm{tr}\left[#1\right]} 
\newcommand{\id}{\mathbbm{1}} 
\newcommand{\A}{\mathsf{A}}
\newcommand{\B}{\mathsf{B}}
\newcommand{\sX}{\mathcal{X}} 
\newcommand{\sY}{\mathcal{Y}} 
\newcommand{\sZ}{\mathcal{Z}} 
\newcommand{\isymbol}{\mathfrak{I}} 
\newcommand{\is}[1]{\isymbol(#1)}
\newcommand{\isn}[1]{\isymbol_n(#1)}
\newcommand{\ism}[1]{\isymbol_m(#1)}
\newcommand{\nsat}{\mathfrak{n}}
\newcommand{\mdet}{\mathfrak{m}}
\newcommand{\dop}{\mathfrak{d}}
\newcommand{\expec}{\mathbb{E}}
\newtheorem{lemma}{Lemma}
\newtheorem*{lemma*}{Lemma}
\newtheorem{proposition}{Proposition}
\theoremstyle{definition}
\newtheorem{definition}{Definition}
\theoremstyle{remark}
\newtheorem{example}{Example}
\title[]{Encoding and decoding of information in general probabilistic theories}
\author{Teiko Heinosaari}
\email{teiko.heinosaari@jyu.fi}
\address{Faculty of Information Technology, University of Jyväskylä, Finland}
\author{Leevi Lepp\"aj\"arvi}
\email{leevi.leppajarvi@savba.sk}
\address{RCQI, Institute of Physics, Slovak Academy of Sciences,D\'ubravsk\'a cesta 9, 84511 Bratislava, Slovakia}
\author{Martin Pl\'{a}vala}
\email{martin.plavala@uni-siegen.de}
\address{Naturwissenschaftlich-Technische Fakult\"{a}t, Universit\"{a}t Siegen, Walter-Flex-Strasse 3, 57068 Siegen, Germany}
\date{}
\begin{document}

\maketitle

\begin{abstract}
    Encoding and decoding are the two key steps in information processing. In this work we study the encoding and decoding capabilities of operational theories in the context of information-storability game, where the task is to freely choose a set of states from which one state is chosen at random and by measuring the state it must be identified; a correct guess results in as many utiles as the number of states in the chosen set and an incorrect guess means a penalty of a fixed number of utiles. We connect the optimal winning strategy of the game to the amount of information that can be stored in a given theory, called the information storability of the theory, and show that one must use so-called nondegradable sets of states and nondegradable measurements whose encoding and decoding properties cannot be reduced. We demonstrate that there are theories where the perfect discrimination strategy is not the optimal one so that the introduced game can be used as an operational test for super information storability. We further develop the concept of information storability by giving new useful conditions for calculating it in specific theories.
\end{abstract}

\section{Introduction}

Let's have a birthday party! For entertainment and fun, we could choose to play a specific kind of game.
It goes like this: given a state space $\state$ you are allowed to pick $n \in \nat$ and a subset of states $\sX = \{s_i\}_{i=1}^n$. 
You will then be given a random unknown state from the set $\sX$ and allowed to perform a measurement. 
After obtaining a measurement outcome, you have to declare the label of the state you think you were given. If you are correct, then you win $n$ utiles\footnote{Utile is a theoretical unit of utility. For small amounts it can be replaced by money, but unlike to money, the law of diminishing returns does not apply to utiles.}, 
 while if you are wrong you get $w$ utiles. We will mostly be concerned with the cases when $w < 0$, meaning that you loose utiles if your guess is wrong.
 Here by a state space we mean any set of preparations closed with respect to randomization, this clearly includes both classical and quantum theories and such state spaces were treated by Holevo in his celebrated textbooks \cite{PSAQT82,SSQT01,QSCI12}.
 The aforementioned game has a clear practical interpretation: the aim is to pick the largest possible set of states $\sX = \{s_i\}_{i=1}^n$ that can be efficiently used for storing classical information. Thus we will refer to this game as the \emph{information-storability game}. The information storability game is illustrated in Fig. \ref{fig:inf-st-game}.

 \begin{figure}
\begin{center}
    \includegraphics[width=0.8\textwidth]{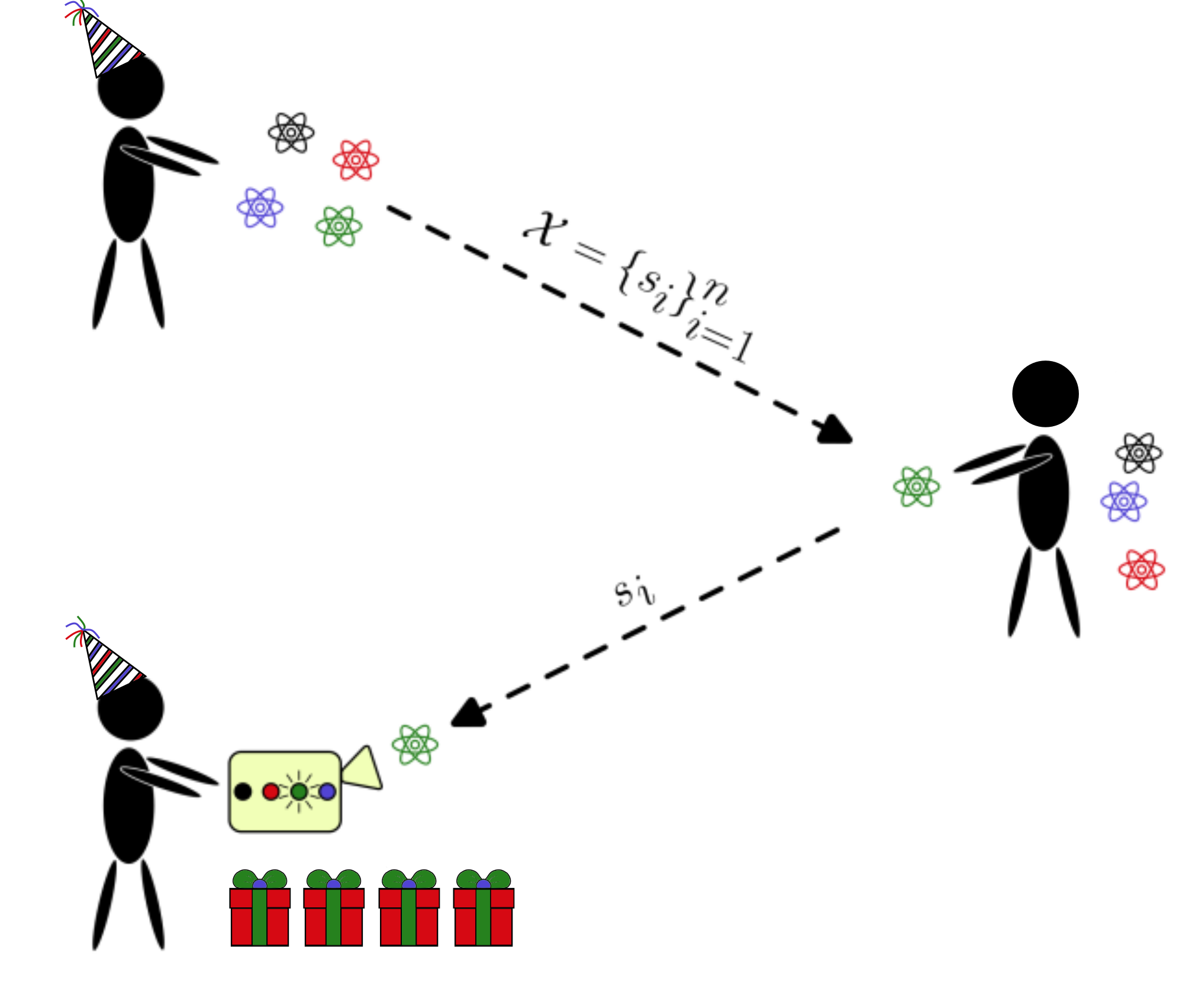}
\caption{ The party game of the year (the information storability game): you are free to choose a set $\sX$ of $n$ states from which a single state will be chosen and your task is to guess which state it was. Correct answer rewards you $n$ utiles (represented by presents in the figure) and incorrect answer costs you a penalty (some fixed number of utiles).}
\label{fig:inf-st-game}
\end{center}
\end{figure}

What is the most optimal strategy for the information-storability game? 
Clearly, one can choose $n=1$ in which case one always wins, but the payoff is small. 
A better strategy would be to pick $\sX$ to be one of the largest possible sets of perfectly distinguishable states, in which case a rational player still always wins and the payoff is given by the maximal number of perfectly distinguishable states in a given theory.
For instance, in the case of quantum theory the payoff would coincide with the dimension of the underlying Hilbert space. In general in quantum theory, the quantity we are working with is closely related to accessible information and thus upper bounded by the Holevo information \cite{Holevo73b}.
It is also clear that somehow artificially enlarging $\sX$ is not helpful since then the probability of guessing correctly decreases and that affects the expectation value of the payoff.
The questions we want to answer are the following. For a fixed state space $\state$ and a given penalty $w$, what is the best strategy in the information-storability game? 
Are there state spaces $\state$ where the optimal strategy also involves the possibility of guessing incorrectly?

Information storing depends both on encoding and decoding, hence one can look these and other related questions both from the perspectives of states and measurements.
The information storing capability of an abstract state space has been introduced and studied in \cite{MaKi18}. 
One of the main results of that investigation is that a state space with large information storing capability is necessarily highly asymmetric. 
This explains why higher dimensional quantum state spaces do not have point-symmetry as the qubit state space has.
In addition to these results, information storability can be compared to the operational dimension of the theory. 
In quantum and classical theories the two numbers are the same, but in general they can be different.
In \cite{MaKi18} a sufficient condition for the agreement of the numbers was derived and it is relying on the existence of suitable symmetry transformations. 
The symmetry group of a state space is a useful and descriptive feature of the theory. 
However, the symmetry group has no direct operational manifestation.

In the current work we develop the concept of information storability further. 
We introduce a binary division for all finite subsets of states into degradable and nondegradable sets, depending on their information storing property.
Roughly speaking, a degradable set has a proper subset that already stores the same information.
Information storing property of subsets of states has a dual notion for measurements, namely, decoding power.
Aided with these concepts, we present an operational classification of state spaces that is based on their information storing and decoding properties.
In this classification, quantum theory and classical theory belong to the same class, while there are some other state spaces that behave differently.

\section{Encoding and decoding of information} \label{sec:encoding-decoding}

In the information storing game the finite set of states (and thus also the number of states) can be chosen freely and the goal is to maximize the number of utiles that one gets by correctly distinguishing the states and by avoiding incorrect guesses which causes one to lose utiles. We will consider the information storing game in the framework of operational theories.

\subsection{Short introduction to operational theories}

The framework of operational theories is built on describing physical experiments by preparing, transforming, measuring and joining (physical) systems. In particular, we will use the convex formulation of \emph{general probabilistic theories (GPTs)}, where the set of states $\state$, i.e., all possible different preparation procedures of the system, is assumed to be a compact convex subset of a finite-dimensional real vector space (see e.g. \cite{LamiThesis,Plavala23,LeppajarviThesis} for more thorough reviews). There convexity is a natural requirement that arises from the possibility of probabilistic mixing of different preparation procedures so that if we have two preparation devices preparing states $s_1,s_2 \in \state$ we can assign a probability $p \in [0,1]$ and use the state $s_1$ with probability $p$ and state $s_2$ with probability $1-p$ in each round of an experiment leading to a mixed state $p s_1 + (1-p) s_2 \in \state$. On the other hand, a measurement $\oA$ with $n$ $(< \infty)$ outcomes is taken to be a collection of affine functionals $\oA_i: \state \to [0,1]$, called \emph{effects}, for all $i\in [n] := \{1, \ldots,n\}$ such that $\sum_{i=1}^n \oA_i(s) =1 $ for all states $s \in \state$. 
The interpretation then is that $\oA_i(s)$ describes the probability that an outcome $i$ is attained when we measure the system which is in state $s$ with a measurement $\oA$. Then the normalization $\sum_{i=1}^n \oA_i(s) =1 $, or equivalently, $\sum_{i=1}^n \oA_i =u $, where $u$ is the unit effect defined as $u(s)=1$ for all $s \in \state$, guarantees that some outcome is always attained in each measurement. We denote the set of effects on a state space $\state$, i.e., the set of affine functionals $e: \state \to [0,1]$ by $\effect(\state)$, and the set of measurements on $\state$ with $\obs(\state)$. The set of measurements with $n$ outcomes is then denoted by $\obs_n(\state)$.

 For our future analysis, following \cite{KiNuIm10} we say that a nonzero effect $e \in \effect(\state)$ is \emph{indecomposable} if any decomposition of $e$ into a sum of two other nonzero effects $e_1, e_2 \in \effect(\state)$ as $e = e_1 + e_2$ implies that $e= \lambda_1 e_1= \lambda_2 e_2$ for some $\lambda_1, \lambda_2 > 0$. It is known that indecomposable effects are exactly those that lie on the extreme rays of the positive dual cone $\{f \in \mathrm{span}(\state)^*\, | \, f(s) \geq 0 \ \forall s \in \state \}$ \cite{KiNuIm10}. We denote the set of indecomposable effects by $\effect_{ind}(\state)$ and the set of extreme indecomposable effects by $\effect^{ext}_{ind}(\state)$. Similarly we say that a measurement $\oA \in \obs(\state)$ is indecomposable if all of its nonzero effects are indecomposable and we denote the set of indecomposable measurement by $\obs_{ind}(\state)$.

\begin{example}[Quantum theory]
Consider a $d$-dimensional Hilbert space denoted as $\hi$. Let $\lh$ represent the algebra of linear operators acting on $\hi$, and $\lsh$ be the real vector space of self-adjoint operators on $\hi$. The state space $\state^q_d$ of a $d$-dimensional quantum theory consists of density matrices on $\hi$, defined as:
\begin{align*}
\state^q_d = \{ \varrho \in \lsh \, | \, \varrho \geq O, \ \tr{\varrho}=1 \},
\end{align*}
where $O$ is the zero-operator, and the partial order is induced by the cone of positive semi-definite matrices. The pure states, or extreme points of the state space, precisely correspond to rank-1 projections on $\hi$, i.e., operators of the form $\kb{\varphi}{\varphi}$ for some unit vector $\varphi \in \hi$.

The set of effects $\effect(\state^q_d)$ can be shown to be isomorphic to the set $\eh$ of self-adjoint operators bounded between $O$ and $\id$, where $\id$ is the identity operator on $\hi$, expressed as:
\begin{align*}
\effect(\state^q_d) \simeq \eh := \{ E \in \lsh \, | \, O \leq E \leq \id \}.
\end{align*}
Here, extreme effects align with projections on $\hi$. The indecomposable effects correspond to rank-1 operators of the form $\lambda \kb{\varphi}{\varphi}$ for some unit vector $\varphi \in \hi$ and $\lambda>0$ \cite{KiNuIm10}. Clearly effects of this form are extreme if and only if $\lambda=1$.

Measurements with a finite number of outcomes on $\hi$ are characterized by positive operator-valued measures (POVMs). These measures are represented as maps of the form $A: x \mapsto A(x)$, where $x$ belongs to some outcome set $[n]$, and the effect operators $A(x)$ in $\eh$ satisfy the condition $\sum_{x=1}^n A(x) = \id$. The indecomposable measurements correspond to POVMs with rank-1 effects.
\end{example}

\subsection{Degradable states and measurements in the information-storability game}

In the information-storability game, if one chooses a set of $n$ states $\sX=\{s_i\}_{i=1}^n \subset \state$ and a measurement $\oA \in \obs_n(\state)$ with $n$ outcomes, then given a random unknown state $s_i$ from $\sX$ the probability that one manages to distinguish that state from the others by using $\oA$ is just $\oA_i(s_i)$ in which case one gains $n$ utiles; otherwise one gains $w<0$ utiles, i.e., loses $|w|$ utiles with probability $1-\oA_i(s_i)$. Thus, the average number of utiles that one is rewarded when the game is repeated by using the states $\sX$ and the measurement $\oA$ is then
\begin{align}
 &   n \cdot \frac{1}{n}\sum_{i=1}^n \oA_i\left(s_i\right) + w \cdot \left[ 1- \frac{1}{n}\sum_{i=1}^n \oA_i\left(s_i\right) \right]  \nonumber \\
& = w + \left( 1- \frac{w}{n} \right) \sum_{i=1}^n \oA_i\left(s_i\right) =: \expec_w(\sX,\oA) \, , \label{eq:reward}
\end{align}
where we have assumed that in each round of the game the states are chosen with uniform probability. 

The optimal strategy for the game can be found by optimising both the set of states $\sX$ and the measurement $\oA$ in the above expression for $\expec_w(\sX,\oA)$. 
However, the order in which these two optimizations are performed does not matter  and thus we can look separately on both of these optimizations.

Let us start by first considering the states. The key element is choosing a set of states such that with high probability one manages to distinguish them. We define the encoding power of a set of states as the maximal success probability of the minimum error discrimination task of those states (which are chosen with equal apriori probability) multiplied by the  number of states.

 \begin{definition}
 For a finite set of states $\sX = \{s_i\}_{i=1}^n \subset \state$ we define the \emph{encoding power} $\mmax{\sX}$ of $\sX$ as
\begin{equation}
    \mmax{\sX}:= \sup_{\oA \in \obs_n(\state)} n \cdot \dfrac{1}{n}\sum_{i=1}^n \oA_i(s_i) = \sup_{\oA \in \obs_n(\state)} \sum_{i=1}^n \oA_i(s_i) \, .
\end{equation}
Since the set of measurements $\obs_n(\state)$ with fixed number of outcomes $n$ is convex and closed, the supremum is always attained by some measurement.
\end{definition}

Thus, for a given set $\sX$ of $n$ states the maximum average number of utiles that are awarded in the information-storability game by using the states in $\sX$ in terms of the encoding power $\mmax{\sX}$ is given by
\begin{equation}\label{eq:avg-reward-state}
    \mmax{\sX} + w \cdot \left(1- \dfrac{\mmax{\sX}}{n} \right) := \expec_w(\sX) \, .
\end{equation}
Then the maximum probability of gaining $n$ utiles by using the states $\sX$ is $\mmax{\sX}/n$. For example, if $\sX$ consists of $n$ perfectly distinguishable states, then $\mmax{\sX}= n$ and we are always guaranteed to get $n$ utiles.

When finding the optimizing set of states that maximizes the average number of rewarded utiles given by Eq. \eqref{eq:avg-reward-state} we note that we are not just trying to maximize $\mmax{\sX}$ but also the term  $\mmax{\sX}/n$.  In particular, if we find another set $\sY \subset \state$ with $\mmax{\sY} = \mmax{\sX}$ but which has $m < n$ states, then clearly the average number of rewarded utiles increases. In particular, given a set of states $\sX$, we can start looking for the more optimal set $\sY$ by looking at the subsets of $\sX$. Namely, it is straightforward to check that if $\sY \subset \sX$, then $\mmax{\sY} \leq \mmax{\sX}$ because only in the worst case the optimizing measurement for $\mmax{\sY}$ (equipped with suitable number of additional zero effects) will be optimal for $\mmax{\sX}$. However, it may happen that $\mmax{\sY} = \mmax{\sX}$ even for some proper subsets $\sY$ of $\sX$.

\begin{definition}\label{def:degradable-states}
A finite set $\sX\subset\state$ is \emph{degradable} if there exists a proper subset $\sY \subset \sX$ such that 
\begin{equation}
\mmax{\sY} = \mmax{\sX} \, .
\end{equation}
Otherwise $\sX$ is \emph{nondegradable}. 
\end{definition}

Clearly now in the information-storability game, the set of states for the optimal strategy has to be nondegradable.

\begin{example}[Degradable pure qubit states]\label{ex:qubit-degradable}
   In the qubit case let us take two non-orthogonal unit vectors $\varphi, \psi \in \complex^2$ and the unit vectors orthogonal to them $\varphi^ \perp, \psi^ \perp \in \complex^2$ and let us consider the following set of four pure qubit states 
   $$ 
   \sX= \{\kb{\varphi}{\varphi},\kb{\varphi^\perp}{\varphi^\perp},\kb{\psi}{\psi},\kb{\psi^\perp}{\psi^\perp}\} \subset \state^q_2 \, .
   $$
   We note that from the basic decoding theorem (see e.g. \cite{QPSI10}) it follows that the maximum encoding power of any set of states in qubit is 2 (see Sec. \ref{sec:is}). Clearly, since both $\sY:= \{\kb{\varphi}{\varphi},\kb{\varphi^\perp}{\varphi^\perp}\}$  and $\sZ:=\{\kb{\psi}{\psi},\kb{\psi^\perp}{\psi^\perp}\}\}$ contain two orthogonal pure states, we have that $\mmax{\sY}= \mmax{\sZ} = 2$ so that also $\mmax{\sX}=2$ and thus $\sX$ is degradable. Thus, in this case one of the optimal strategies to decode the information encoded in $\sX$ is just to discriminate the two states in one of the orthogonal pairs contained in $\sX$.
\end{example}

Dually, one can also approach the information-storability game by not fixating on the states but by considering the distinguishing measurements instead. In particular, we can define the decoding power of a measurement as the maximal success probability of the minimum error discrimination task of the optimal set of states which the measurement can distinguish (with equal apriori probabilities) multiplied by the number of outcomes of the measurement.

\begin{definition}
For a measurement $\oA \in \obs_n(\state)$ with $n$ outcomes the \emph{decoding power} $\lmax{\oA}$ of $\oA$ is defined as
\begin{equation}
    \lmax{\oA} := n \cdot \dfrac{1}{n} \sum_{i=1}^n \sup_{s \in \state} \oA_i(s) = \sum_{i=1}^n \sup_{s \in \state} \oA_i(s) \, .
\end{equation}
Since the state space $\state$ is convex and closed, the supremum is always attained.
\end{definition}

Thus, for a given measurement $\oA$ with $n$ outcomes, the maximal average number of utiles which one is rewarded with in the information-storability game by using the decoding measurement $\oA$ in terms of the decoding power $\lmax{\oA}$ reads as
\begin{equation}
     \lmax{\oA} + w \cdot\left(1- \dfrac{\lmax{\oA}}{n} \right) =: \expec_w(\oA) \, .
\end{equation}
Then the maximum probability of gaining $n$ utiles by using the measurement $\oA$ is $\lmax{\oA}/n$.

Analogously as in the case of states, when starting to find the optimizing measurement for the optimal strategy in the information-storing game, we can focus on measurements which in the above equation do not change the value of $\lambda_{\max}$ but which nevertheless have less number of outcomes. One natural way is to look at merging some of the outcomes of the measurement and see if $\lambda_{\max}$ reduces or not. Thus, we can define an analogous concept of degradability for measurements:

\begin{definition}\label{def:degradable-measurement}
A measurement $\oA$ with $n$ outcomes is \emph{degradable} if there exists a measurement $\oB$ with $m<n$ outcomes which can be obtained from $\oA$ by merging some of its effects such that
\begin{equation}
\lmax{\oA} = \lmax{\oB} \, .
\end{equation}
If $\oA$ is not degradable then it is \emph{nondegradable}. 
\end{definition}
We conclude that in the information-storing game, the decoding measurement for the optimal strategy has to be nondegradable.

In the context of the information-storability game the above definitions of degradability seem intuitive (at least to us). However, in general there are more general ways of obtaining states from other states and measurements from another measurements. For more general definitions of degradability (which we show to reduce to the ones presented above) can be found in Appendix \ref{appendix-degradability}. 

\subsection{Conditions on degradable states and measurements}
For a finer analysis of degradability it is useful to define also set dependent decoding power; we denote
\begin{equation}
\lmax{\oA |  \sX} = \sum_{i}\sup_{s \in \sX}\oA_i(s) 
\end{equation}
for all $\oA \in \obs_n(\state)$ and $\sX \subset \state$.
The following observation will be useful in later developments. See Appendix \ref{appendix-degradability} for the proof.

\begin{lemma}\label{lemma:mmax-lmax-subset}
For any finite subset of states $\sX \subset \state$, we have that 
\begin{equation}
\mmax{\sX} = \sup_{\oA} \lmax{\oA | \sX}
\end{equation}
and they both can maximized by the same measurement. 
\end{lemma}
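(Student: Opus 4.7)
The plan is to split the desired equality into two inequalities and then observe that any optimizer of the left-hand side automatically works for the right-hand side. First I would dispatch the easy direction $\mmax{\sX} \leq \sup_{\oA} \lmax{\oA \mid \sX}$ by the pointwise bound $\oA_i(s_i) \leq \sup_{s\in\sX}\oA_i(s)$ applied to each outcome of a measurement $\oA \in \obs_n(\state)$ and then summed over $i=1,\dots,n$; taking the supremum over $\oA$ on both sides finishes this half.

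The main technical content is in the reverse inequality $\sup_{\oA}\lmax{\oA\mid\sX} \leq \mmax{\sX}$. Here, given an arbitrary measurement $\oB$ with $m$ outcomes, I would construct a measurement in $\obs_n(\state)$ that realizes $\lmax{\oB\mid\sX}$ as an $\mmax$-type sum. The natural construction is to pick, for each outcome $i$ of $\oB$, an index $\sigma(i) \in \{1,\dots,n\}$ at which $\sup_{s\in\sX}\oB_i(s)$ is attained (possible because $\sX$ is finite), and then to coarse-grain $\oB$ by the map $\sigma$, setting $\oA_j := \sum_{i:\sigma(i)=j}\oB_i$ and padding with zero effects to reach exactly $n$ outcomes. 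The normalization $\sum_j \oA_j = \sum_i \oB_i = u$ shows $\oA \in \obs_n(\state)$, and a one-line computation gives
\begin{equation*}
\sum_{j=1}^n \oA_j(s_j) = \sum_{i=1}^m \oB_i(s_{\sigma(i)}) = \sum_{i=1}^m \sup_{s\in\sX}\oB_i(s) = \lmax{\oB\mid\sX},
\end{equation*}
so $\lmax{\oB\mid\sX} \leq \mmax{\sX}$ and taking the supremum over $\oB$ closes the loop.

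For the ``same measurement'' clause, I would argue that any optimizer $\oA^*$ of $\mmax{\sX}$ automatically optimizes the right-hand side as well: the first inequality gives $\lmax{\oA^*\mid\sX} \geq \sum_i \oA^*_i(s_i) = \mmax{\sX}$, while the reverse inequality applied to $\oA^*$ itself yields $\lmax{\oA^*\mid\sX} \leq \mmax{\sX}$, forcing equality. The only step I expect to require care is the reindexing/coarse-graining construction in the reverse inequality, and in particular checking that it produces a bona fide measurement in $\obs_n(\state)$ whose diagonal sum equals $\lmax{\oB\mid\sX}$; the remaining manipulations are immediate from the definitions.
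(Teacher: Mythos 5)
Your proposal is correct and follows essentially the same route as the paper's proof: the easy direction via the pointwise bound $\oA_i(s_i)\leq\sup_{s\in\sX}\oA_i(s)$, and the reverse direction by grouping the outcomes of an arbitrary measurement $\oB$ according to which state of $\sX$ maximizes each effect and merging them into an $n$-outcome measurement whose diagonal sum equals $\lmax{\oB\mid\sX}$. Your packaging via the map $\sigma$ with zero-padding is a slightly leaner version of the paper's equivalence-class construction (it avoids both the subset monotonicity of $\mu_{\max}$ and the assumption that $\sup_{\oB}\lmax{\oB\mid\sX}$ is attained), but the underlying idea is identical.
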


In the following we look first conditions for a subset of states to be degradable and then a condition for a measurement to be degradable.

\subsubsection{Degradable sets of states}

Next we will give some sufficient conditions for a set of states to be degradable.

\newpage

\begin{proposition} \label{prop:X-degradable}
Let $\sX\subset\state$ be a finite set.
In the following cases $\sX$ is degradable:
\begin{itemize}
\item[(a)] $\sX$ has more than $\dim(\aff(\state))+1$ elements.
\item[(b)] $\sX \neq \ext(\conv(\sX))$
\end{itemize}
\end{proposition}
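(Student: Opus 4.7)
The plan is to treat the two cases separately and, in each, construct from an optimal measurement $\oA\in\obs_n(\state)$ for $\sX$ a measurement with fewer outcomes that certifies $\mmax{\sY}=\mmax{\sX}$ for a suitable proper subset $\sY\subsetneq\sX$. Since the subset inequality $\mmax{\sY}\le\mmax{\sX}$ is noted just after Definition~\ref{def:degradable-states}, the entire burden is to produce a matching lower bound in each case.

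For case (b), $\conv(\sX)$ is a polytope so $\ext(\conv(\sX))\subseteq\sX$, and the hypothesis forces this inclusion to be strict; thus some $s_k\in\sX$ admits a convex decomposition $s_k=\sum_{j\neq k}p_j s_j$ with $p_j\ge 0$ and $\sum_{j\neq k}p_j=1$. I would take $\sY=\sX\setminus\{s_k\}$ and, given an optimal $\oA$, define $\oB\in\obs_{n-1}(\state)$ by $\oB_j:=\oA_j+p_j\oA_k$ for $j\neq k$. Normalization $\sum_j\oB_j=u$ follows from $\sum_{j\neq k} p_j=1$, positivity is immediate, and affinity of $\oA_k$ gives $\oA_k(s_k)=\sum_{j\neq k}p_j\oA_k(s_j)$. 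Summing $\oB_j(s_j)$ then yields $\sum_{j\neq k}\oB_j(s_j)=\mmax{\sX}$, which is the desired lower bound.

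For case (a), I would choose an optimal measurement $\oA^*$ that is also an extreme point of $\obs_n(\state)$. Such an $\oA^*$ exists because the affine objective $\oA\mapsto\sum_i\oA_i(s_i)$ attains its maximum on a nonempty closed face of the compact convex polytope $\obs_n(\state)$, and extreme points of a face are extreme in the whole polytope. The key structural fact to invoke is that the nonzero effects of any extreme measurement are linearly independent: any nontrivial relation $\sum_i c_i\oA^*_i=0$ among the nonzero effects yields the perturbation $\oA^*_i\mapsto(1+tc_i)\oA^*_i$ (leaving the zero effects fixed), which preserves both positivity for small $|t|$ and the normalization $\sum_i\oA^*_i=u$, contradicting extremality. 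Because the space of affine functionals on $\aff(\state)$ has dimension $\dim(\aff(\state))+1<n$, at least one $\oA^*_k$ must vanish. Then $\oA^*_k(s_k)=0$, and the truncated tuple $(\oA^*_i)_{i\neq k}\in\obs_{n-1}(\state)$ certifies $\mmax{\sX\setminus\{s_k\}}\ge\mmax{\sX}$.

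The least routine step is the extremality-plus-linear-independence argument underlying case (a); case (b) is a direct computation once one writes down the correct redistribution of $\oA_k$. I would first check whether the linear-independence characterization of extreme measurements is already recorded in \cite{KiNuIm10} or in a GPT review such as \cite{LamiThesis} or \cite{LeppajarviThesis}, so as to cite it; if not, the short perturbation argument above makes the proof self-contained.
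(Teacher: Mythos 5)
Your proof is correct, and it takes a genuinely different and more elementary route than the paper. The paper handles both cases with one piece of machinery: it invokes Lemma~\ref{lemma:mmax-lmax-subset} together with the simulation scheme to argue that an optimal measurement for $\mmax{\sX}$ can be replaced by an extreme indecomposable (simulation irreducible) measurement $\oB^{(k)}$ achieving the same value; linear independence of the nonzero effects of $\oB^{(k)}$ then gives (a), while for (b) the paper bounds $\mmax{\sX}$ by $\lmax{\oB^{(k)}\,|\,\conv(\sX)}=\lmax{\oB^{(k)}\,|\,\ext(\conv(\sX))}$. Your case (b) avoids all of this: folding $\oA_k$ into the remaining effects via the convex weights of $s_k=\sum_{j\neq k}p_j s_j$ and using affinity of $\oA_k$ is a clean two-line computation that produces the required $(n-1)$-outcome measurement explicitly. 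Your case (a) uses the same key structural fact as the paper (linear independence of the nonzero effects of an extreme measurement, which is indeed the content cited from \cite{FiHeLe18}), but applies it to an extreme point of $\obs_n(\state)$ obtained from the Bauer maximum principle rather than to a simulation-irreducible measurement, concluding that some effect must vanish outright and can simply be deleted; this gives the slightly cleaner conclusion that a zero effect exists, whereas the paper keeps all $n$ outcomes and restricts the state set. Two cosmetic points: $\obs_n(\state)$ is compact and convex but not in general a polytope (e.g.\ for quantum theory), though compactness and convexity are all your Bauer-type argument needs; and in (b) it is worth stating explicitly that $\ext(\conv(\sX))\subseteq\sX$ always holds for the convex hull of a finite set, so the hypothesis $\sX\neq\ext(\conv(\sX))$ does force some $s_k$ to be a convex combination of the others. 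The trade-off is that the paper's heavier route also establishes, as a by-product, that $\mmax{\sX}$ is attained by extreme indecomposable measurements, a fact it reuses later (e.g.\ in the simplex example $S_4$), whereas your argument is self-contained but yields only the proposition itself.
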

\begin{proof}
Let $\sX = \{s_1, \ldots, s_n\} \subset \state$ for some $n \in \nat$ and let $\oA$ be some $n$-outcome measurement that maximizes $\mmax{\sX}$. We see that by Lemma \ref{lemma:mmax-lmax-subset} we have that $\mmax{\sX} = \sup_{\oB} \lmax{\oB | \sX} = \lmax{\oA | \sX}$. By using the simulation scheme of measurements introduced in \cite{oszmaniec2017simulating,guerini2017operational,FiHeLe18} now $\oA$ can be written as a simulation of some finite number $M$ of extreme indecomposable measurements (also called as simulation irreducible measurements) $\{\oB^{(k)}\}_{k=1}^M$ with some finite number $m$ of outcomes with some probability distribution $(p_k)_{k=1}^M$ and row-stochastic postprocessing matrices $\nu^{(k)}=(\nu^{(k)}_{ji})_{j \in [m], i \in [n]}$ (so that $\nu_{ji} \geq 0$ for all $i\in[n]$ and $j \in [m]$ and and $\sum_i \nu_{ji} =1$ for all $j\in [m]$) such that
\begin{align}\label{eq:simulation}
    \oA_i = \sum_{k=1}^M \sum_{j=1}^{m} p_k \nu^{(k)}_{ji}\oB^{(k)}_j
\end{align}
for all $i \in [n]$. 

(a) We now observe that
\begin{align*}
    \lmax{\oA | \sX} &= \sum_{i=1}^n \sup_{l} \oA_i(s_l) = \sum_{i=1}^n \sup_l \left( \sum_{k=1}^M \sum_{j=1}^{m} p_k \nu^{(k)}_{ji}\oB^{(k)}_j\right)(s_l) \\
    &\leq \sum_{i=1}^n \sum_{k=1}^M \sum_{j=1}^{m} p_k \nu^{(k)}_{ji} \sup_l \oB^{(k)}_j(s_l) = \sum_{k=1}^M \sum_{j=1}^{m} p_k  \sup_l \oB^{(k)}_j(s_l) \\
    &=  \sum_{k=1}^M p_k \lmax{\oB^{(k)} | \sX} \leq \max_{k} \lmax{\oB^{(k)} | \sX}.
\end{align*}
Thus, since $\oA$ maximizes $\mmax{\sX}$ we get that also $\oB^{(k)}$ maximizes it for some $k \in [M]$. Since $\oB^{(k)}$ is an extreme measurement its nonzero effects must be linearly independent \cite{FiHeLe18} so that it has maximally $\dim(\aff(\state)) +1$ nonzero outcomes. Thus, if $n>\dim(\aff(\state)) +1$, then there exists a proper subset $\sY \subset \sX$ of at most  $\dim(\aff(\state)) +1$ states such that $\lmax{\oB^{(k)} | \sY } = \lmax{\oB^{(k)} | \sX } = \mmax{\sX}$ implying that $\mmax{\sY} = \mmax{\sX}$ so that $\sX$ is degradable.

(b) From Eq. \eqref{eq:simulation} we see that
\begin{align*}
    \mmax{\sX} &= \sum_{i=1}^n \oA_i(s_i) = \sum_{i=1}^n \sum_{k=1}^M \sum_{j=1}^{m} p_k \nu^{(k)}_{ji}\oB^{(k)}_j(s_i) \\
    &= \sum_{k=1}^M \sum_{j=1}^{m} p_k \oB^{(k)}_j\left(\sum_{i=1}^n\nu^{(k)}_{ji}s_i\right),
\end{align*}
where now $s^{(k)}_j := \sum_{i=1}^n\nu^{(k)}_{ji}s_i \in \conv(\sX)$ for all $j \in [m]$ for all $k \in [M]$. Then it follows that
\begin{align}
    \mmax{\sX} &= \sum_{k=1}^M \sum_{j=1}^{m} p_k \oB^{(k)}_j\left(s^{(k)}_j\right) \leq \sum_{k=1}^M p_k \lmax{\oB^{(k)} \left| \right. \conv(\sX)} \nonumber \\
    &\leq \max_k \lmax{\oB^{(k)} \left| \right. \conv(\sX)}. \label{eq:mmax-conv}
\end{align}
Since linear functions (such as effects) attain their supremums on the extreme points of a convex set, we have that $\lmax{\oB' |\conv(\sX)} = \lmax{\oB' | \ext(\conv(\sX))}$ for any measurement $\oB'$. Thus, if we denote $\sY = \ext(\conv(\sX))$ we have by Eq. \eqref{eq:mmax-conv} that there exists $k \in [M]$ such that $\mmax{\sX} \leq \lmax{\oB^{(k)} \left| \right. \sY} \leq \mmax{ \sY}$. If now $\sX \neq \ext(\conv(\sX))$, so that we have a strict inclusion $\sY \subset \sX$ and it follows that $\mmax{\sY} = \mmax{\sX}$ so that $\sX$ is degradable. 
\end{proof}

One can easily see that neither of the conditions is necessary for degradability in general: Let us consider the following set of three pure qubit states $\sX= \{\kb{\varphi}{\varphi}, \kb{\varphi^\perp}{\varphi^\perp}, \kb{\psi}{\psi}\}$ where $\{\ket{\varphi}, \ket{\varphi^\perp}\} \in \complex^2$ is some orthogonal basis in $\complex^2$ and $\ket{\psi}= \frac{1}{\sqrt{2}}(\ket{\varphi}+\ket{\varphi^\perp})$. Clearly $\sX$ is degradable (similarly as in Example \ref{ex:qubit-degradable}) since we have that $\mmax{\sX} = \mmax{\{\kb{\varphi}{\varphi}, \kb{\varphi^\perp}{\varphi^\perp}\}} =2$ but on the other hand $\sX$ has less than $\dim(\aff(\state^q_2))+1=4$ states and clearly $\sX = \ext(\conv(\sX))$.


A potential generalization of Prop.~\ref{prop:X-degradable} would be to prove that $\sX$ is degradable if the states in $\sX$ are not affinely independent. This is not true in general, as demonstrated by the following example:
\begin{example}
Let $\state = S_4$ be the simplex generated by 4 linearly independent states $x_1, x_2, x_3, x_4$. There is only one extreme indecomposable (simulation irreducible) measurement $b$ given by effects $b_i \in \effect^{ext}_{ind}(\state)$, $i \in \{1, \ldots, 4\}$ such that $b_i(x_j) = \delta_{ij}$ for $i,j \in \{1, \ldots, 4\}$ where $\delta_{ij}$ is the Kronecker delta. Define the states
\begin{align}
x_{1123} &= \dfrac{x_1}{2} + \dfrac{x_2 + x_3}{4}, \\
x_{1224} &= \dfrac{x_2}{2} + \dfrac{x_1 + x_4}{4}, \\
x_{1334} &= \dfrac{x_3}{2} + \dfrac{x_1 + x_4}{4}, \\
x_{2344} &= \dfrac{x_4}{2} + \dfrac{x_2 + x_3}{4},
\end{align}
and let $\sX = \{ x_{1123}, x_{1224}, x_{1334}, x_{2344} \}$. We clearly have that $x_{1123} + x_{2344} = x_{1224} + x_{1334}$ so the states in $\sX$ are not affinely independent. Using the result of Lemma \ref{lemma:mmax-lmax-subset} that $\mmax{\sX} = \sup_{\oA} \lmax{\oA  |  \sX}$ and the fact that the previous supremum can be attained by extreme indecomposable measurements (see Appendix \ref{appendix-degradability} for proof), it follows that since there is only one simulation irreducible measurement on $S_4$ we must have
\begin{equation}
\mmax{\sX} = \sum_{i = 1}^4 \sup_{s \in \sX} b_i(s) = 2
\end{equation}
where we have used that $\sup_{s \in \sX} b_i(s) = \frac{1}{2}$, which is easy to check directly.

Now consider $\sY =  \{ x_{1123}, x_{1224}, x_{1334} \}$, then we get
\begin{equation}
\mmax{\sY} = \sum_{i = 1}^4 \sup_{s \in \sX} b_i(s) = \dfrac{3}{2} +  \sup_{s \in \sY} b_3(s) = \dfrac{7}{4} < 2,
\end{equation}
and analogically for the other subsets of $\sX$. Thus $\sX$ is nondegradable.
\end{example}

Conditions that are both necessary and sufficient for degradibility of sets of states remains an open problem.

\subsubsection{Degradable measurements} For measurements we can actually show a necessary and sufficient condition for degradability.

\begin{proposition}\label{prop:nondeg-measurement}
A measurement is nondegredable if and only if each of its effects attains its maximum value only on some different set of states.
\end{proposition}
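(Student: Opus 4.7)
The plan is to reformulate the claim in terms of maximizing sets. For each $i$, let $M_i := \sup_{s \in \state}\oA_i(s)$ and $S_i^\ast := \{s \in \state : \oA_i(s) = M_i\}$. Since $\state$ is compact and $\oA_i$ continuous, each $S_i^\ast$ is nonempty, and $\lmax{\oA} = \sum_i M_i$. I read the hypothesis ``each effect attains its maximum value only on some different set of states'' as the assertion $S_i^\ast \cap S_j^\ast = \emptyset$ for all $i \neq j$. The whole proof then runs on one elementary observation: for any nonempty $I \subseteq [n]$,
\begin{equation*}
\sup_{s \in \state}\sum_{i \in I}\oA_i(s) \;\leq\; \sum_{i \in I} M_i,
\end{equation*}
with equality if and only if $\bigcap_{i \in I} S_i^\ast \neq \emptyset$. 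The ``$\Leftarrow$'' half is immediate; the ``$\Rightarrow$'' half uses that the supremum on the left is attained (by compactness of $\state$), and any maximizer of the sum must simultaneously saturate every summand.

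For the ``only if'' direction I argue by contrapositive. Suppose $S_i^\ast \cap S_j^\ast \neq \emptyset$ for some $i \neq j$. Form $\oB \in \obs_{n-1}(\state)$ by replacing the pair $\oA_i, \oA_j$ with the single effect $\oA_i + \oA_j$, keeping the remaining $n-2$ effects. By the observation above, $\sup_{s}(\oA_i+\oA_j)(s) = M_i + M_j$, hence $\lmax{\oB} = \sum_k M_k = \lmax{\oA}$ and $\oA$ is degradable.

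For the ``if'' direction, take any merging of $\oA$ into some $\oB$ with $m<n$ outcomes, coming from a partition $\{I_1,\dots,I_m\}$ of $[n]$ via $\oB_\ell = \sum_{i\in I_\ell} \oA_i$. At least one block $I_\ell$ has $|I_\ell|\geq 2$, and by pairwise disjointness of the $S_i^\ast$'s one has $\bigcap_{i \in I_\ell}S_i^\ast = \emptyset$; the observation then forces the $\ell$-th summand of $\lmax{\oB}$ to be strictly less than $\sum_{i\in I_\ell}M_i$, while every other summand is bounded above by $\sum_{i\in I_{\ell'}}M_i$. Summing, $\lmax{\oB}<\lmax{\oA}$, so $\oA$ is nondegradable.

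The main care, as far as I can see, is just fixing the right reading of ``different set'' (I take it as pairwise disjoint loci of maximizers, which is the condition that makes the equivalence work) and handling arbitrary partition-mergings rather than pair-by-pair merges; the analytic content is otherwise elementary and collapses to the saturation criterion for $\sup(\oA_i+\oA_j)$.
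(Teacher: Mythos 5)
Your proposal is correct and follows essentially the same route as the paper: both directions reduce to the observation that $\sup_s\sum_{i\in I}\oA_i(s)=\sum_{i\in I}\sup_s\oA_i(s)$ precisely when the maximizing sets $S_i^\ast$, $i\in I$, have a common point, which the paper uses implicitly (merging two effects with a shared maximizer for one direction, and an injectivity/pigeonhole argument on the merging map for the other). Your packaging of this as an explicit saturation lemma and the direct strict-decrease argument for the ``if'' direction are only cosmetic differences from the paper's proof by contradiction.
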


\begin{proof}
    Let us take $\oA \in \obs_n(\state)$ and let us set $S_i = \{s \in \state \, | \, \oA_i(s) = \sup_{t \in \state} \oA_i(t) \}$ for all $i \in [n]$. Suppose that there are two indices $k,k' \in [n]$, $k \neq k'$ such that $S_k \cap S_{k'} \neq \emptyset$. Without loss of generality we may assume that $k =n-1$ and $k' = n$. Thus, we can take a set of $n$ states $ \{s_i\}_{i=1}^n$ such that $s_i \in S_i$ for all $i \in [n]$ and $s_{n-1} = s_{n}$. We can now define a measurement $\oB \in \obs_{n-1}(\state)$ by setting $\oB_i = \oA_i$ for all $i < n-1$ and $\oB_{n-1} = \oA_{n-1} + \oA_n$. We now see that 
    \begin{equation*}
        \lmax{\oA} = \sum_{i=1}^n \oA_i(s_i) = \sum_{i=1}^{n-1} \oB_i(s_i) \leq \lmax{\oB}.
    \end{equation*}
    However, since $\oB$ is formed by just merging some outcomes of $\oA$ we must have that $\lmax{\oB} \leq \lmax{\oA}$. Thus, $\lmax{\oA} = \lmax{\oB}$ so that $\oA$ is degradable.

    For the other direction, let now $\oA \in  \obs_n(\state)$ be a measurement such that each of its effects attains its maximum value only on some different set of states meaning that $S_i \cap S_j= \emptyset$ for all $i \neq j$. Suppose now that $\oA$ is degradable. Thus, there exists a measurement $\oB \in \obs_m(\state)$ with $m<n$ such that $\oB$ is formed by merging some of the effects of $\oA$ and $\lmax{\oA} = \lmax{\oB}$. In particular there exists a function $f: [n] \to [m]$ such that $\oB_j = \sum_{i \in f^{-1}(j)} \oA_i$ for all $j \in [m]$. Since $\lmax{\oA} = \lmax{\oB}$, there exists some set of $m$ states $ \{t_j\}_{j=1}^m$ such that $\sum_j \oB_j(t_j) = \lmax{\oB}$. Now we see that
    \begin{align*}
        \lmax{\oA} &= \lmax{\oB}= \sum_{j=1}^m \oB_j(t_j) = \sum_{j=1}^m \sum_{i \in f^{-1}(j)} \oA_i(t_j) = \sum_{i=1}^n \oA_i(t_{f(i)}) \\
        &\leq  \sum_{i=1}^n \sup_{s \in \state}\oA_i(s) = \lmax{\oA} .
    \end{align*}
    In particular this means that $t_{f(i)} \in S_i$ for all $i \in [n]$. However, since $S_i \cap S_j = \emptyset $ for all $i \neq j$ we must have that $f(i) \neq f(j)$ for all $i \neq j$ so that $f$ is injective. This is not possible since $m < n$. Hence, $\oA$ must be nondegradable.
\end{proof}

\begin{example}
    It is a straighforward consequence of Prop. \ref{prop:nondeg-measurement} that a POVM is nondegradable if and only if the set of eigenvectors corresponding to the maximal eigenvalue an effect operator is unique for every effect. Consequently, a rank-1 POVM is nondegradable if and only if none of its effects are proportional. It is also clear that a POVM that has only projections in its range is nondegradable.
\end{example}

\section{Information storability} \label{sec:is}
Let us first focus on the case when $w=0$ in the information-storing game. Then one does not have to worry about the penalty and the task simplifies to finding a (finite) set of states that maximizes $\mu_{\max}$, or equivalently, a measurement which maximizes $\lambda_{\max}$. Thus, the maximum average number of utiles that the optimal strategy in this case rewards is
\begin{equation}
    \is{\state} := \sup_{\oA \in \obs(\state)} \lmax{\oA} = \sup_{\sX \subset \state} \mmax{\sX},
\end{equation}
which is known as the \emph{information storability} of the state space $\state$. In a given theory it corresponds to the maximal amount of information that can be successfully encoded and decoded in a communication scenario: namely, it is the maximisation of the product of the length of the message which one wishes to communicate and the probability of encoding/decoding the message. In other words, it is the maximal product of the number of states that is used to encode the message and the optimal success probability of decoding the message by performing a minimum-error discrimination task on the states.
We further denote
\begin{equation}
    \isn{\state} := \sup_{\oA \in \obs_n(\state)} \lmax{\oA} = \sup_{\sX \subset \state, |\sX|=n} \mmax{\sX},
\end{equation}
which is the information storability when the messages are limited to the length of $n$.
Clearly, $\isn{\state}\leq \ism{\state}$ whenever $n\leq m$, and we denote by $\nsat_{\state}$ the smallest integer $n$ such that $\isn{\state}=\is{\state}$.

We recall that the \emph{operational dimension} of a state space $\state$ is the largest number of perfectly distinguishable states contained in $\state$, and we denote it by $\dop_{\state}$. It is clear that $\nsat_{\state} \geq \dop_{\state}$. As noted in Example \ref{ex:qubit-degradable} in the case of the qubit, it is consequence of the basic decoding theorem (see e.g. \cite{QPSI10}) that in general in $d$-dimensional quantum theory we have that $\is{\state^q_d} = \dop_{\state^q_d}=\nsat_{\state^q_d}=d$ (see also Example \ref{ex:is-quantum} below). Curiously, there are state spaces $\state$ where $\nsat_{\state} > \dop_{\state}$ and thus $\is{\state} > \dop_{\state}$, in which case we say that the theory manifests \emph{super information storability} (see Example \ref{ex:is-polygon} below).

The concept of information storability was first introduced in \cite{MaKi18} and furthermore they showed that it is linked to the point-asymmetry of the state space. In particular, the information storability $\is{\state}$ of a state space $\state$ is related to the amount of asymmetry of $\state$ given by the (affine-invariant) Minkowski measure $\mathfrak{M}(\state)$ by the relation $\mathfrak{M}(\state) = \is{\state}-1$. In particular this means that $\is{\state}=2$ if and only if the state space $\state$ is point-symmetric (also called centrally-symmetric \cite{LaPaWi18,BlJeNe22}) meaning that there exists a state $s_0 \in \state$ such that for any state $s \in  \state$ there exists another state $s' \in \state$ such that $s_0 = \frac{1}{2}(s+s')$.

Using the introduced notation we see that 
\begin{equation}\label{eq:bdt}
\mmax{\sX} \leq \is{\state} \, ,
\end{equation}
where the upper bound $\is{\state}$ does not depend on the size of the set $\sX$.
This motivates the following definition.

\begin{definition}
A finite set $\sX\subset\state$ that satisfies $\mmax{\sX} = \is{\state}$ is \emph{maximally decodable}.
\end{definition}

\begin{example}(\emph{Maximally decodable sets of pure quantum states}) 
Let us consider $d$-dimensional quantum state space.
Any set of $d$ orthogonal pure states is maximally decodable, but there are also other sets. Namely, let $P_i$, $i=1,\ldots,n$, be pure quantum states  such that
\begin{equation}\label{eq:Pi=r}
\sum_{i=1}^n P_i = r \id
\end{equation}
for some $r\in\real$. (The sum of projections is a positive operator, hence $r > 0$ if it exists.)
The orthogonal case is equivalent to \eqref{eq:Pi=r} with $r=1$, but $r$ can be different and this is hence more general condition.
For each $i$ we define an operator $\oA_i=\tfrac{1}{r} P_i$. 
These operators form a POVM and we have
\begin{equation}
\sum_i \tr{P_i \oA_i} = \frac{1}{r} \sum_i \tr{P^2_i} = \frac{1}{r} \sum_i \tr{P_i} = \frac{1}{r} \tr{\sum_i P_i} = d \, , 
\end{equation}
hence the set $\sX=\{P_1,\ldots,P_n\}$ is maximally decodable. 
A class of sets of the previous kind can be formed by choosing a $d$-dimensional irreducible unitary representation $g \mapsto U_g$ of a finite group $G$ and fixing a pure quantum state $P$.
If $U_g P U_g \neq P$ for all $g \neq e$, the pure states are labeled with the elements of $G$ as $P_g := U_g P U_g^*$. 
More generally, the pure states are labeled with the elements of the quotient group $G/H$, where $H=\{ h \in G : U_hPU_h^*=P\}$. Fig. \ref{fig:qubit-max-dec} shows an example of a maximally decodable (non-orthogonal) set of qubit states.

\begin{figure}
\begin{center}
    \includegraphics[width=7cm]{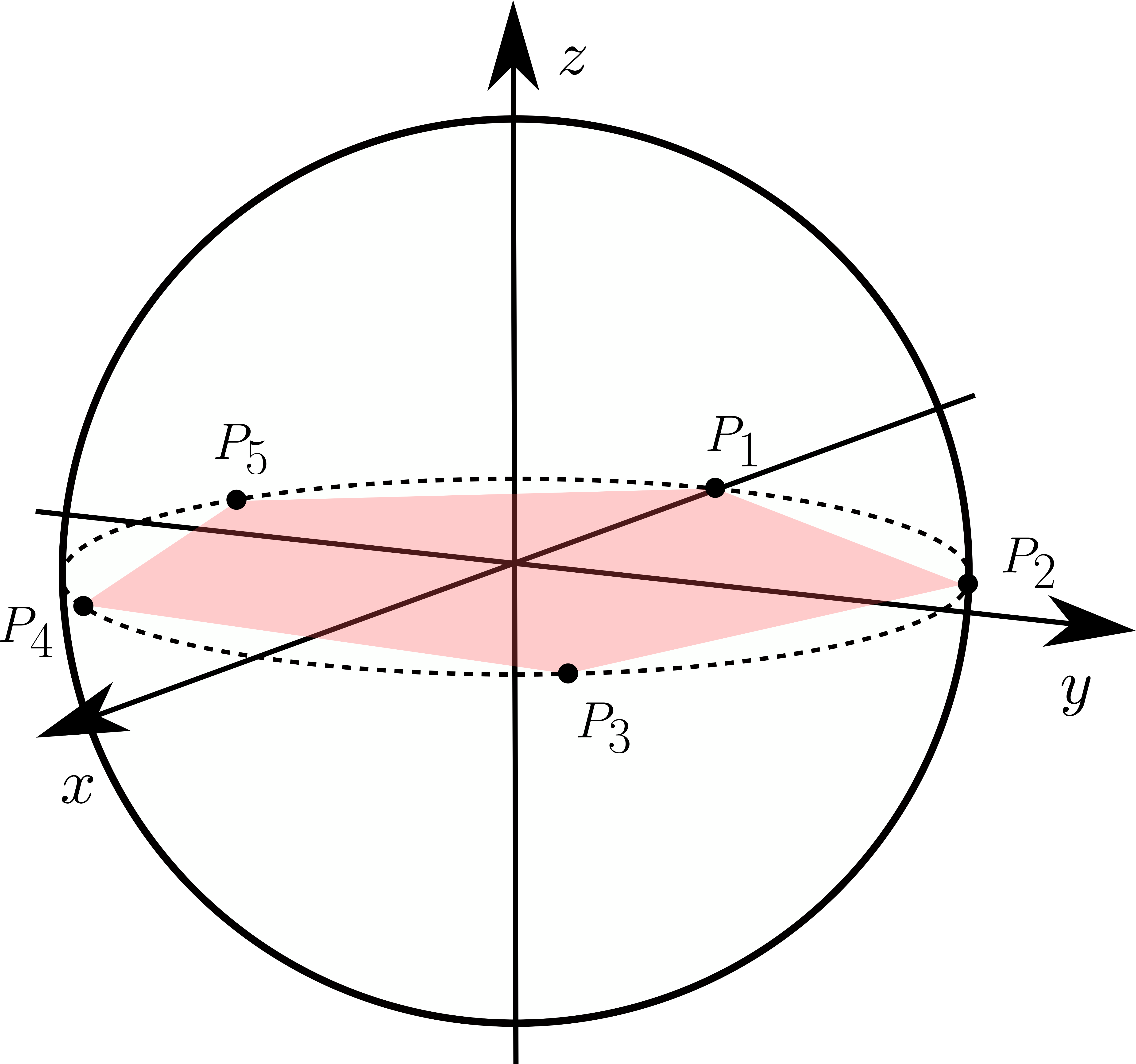}
\caption{Five maximally decodable pure qubit states on the Bloch sphere. }
\label{fig:qubit-max-dec}
\end{center}
\end{figure}

\end{example}

As a generalization of Example \ref{ex:qubit-degradable} we can show the following results connecting degradability and maximal decodability.

\begin{proposition}
Let $\sX\subset\state$ be a finite set. Suppose there exists a proper subset $\sY \subset \sX$ that is maximally decodable. Then $\sX$ is both degradable and maximally decodable.
\end{proposition}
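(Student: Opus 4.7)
The plan is to observe that this proposition follows almost immediately from the monotonicity of the encoding power under inclusion together with the definitions. First I would recall the observation made in the paragraph preceding Definition \ref{def:degradable-states}: whenever $\sY\subseteq\sX$, one has $\mmax{\sY}\leq\mmax{\sX}$, since any optimal measurement for $\sY$ can be promoted to a measurement with $|\sX|$ outcomes by appending zero effects, giving at least as large a value on $\sX$. Combined with the universal upper bound $\mmax{\sX}\leq\is{\state}$ from Eq.~\eqref{eq:bdt}, one obtains the chain
\begin{equation}
\is{\state}=\mmax{\sY}\leq\mmax{\sX}\leq\is{\state},
\end{equation}
where the first equality is the hypothesis that $\sY$ is maximally decodable. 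Hence $\mmax{\sX}=\is{\state}$, so $\sX$ itself is maximally decodable.

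For the degradability claim, the chain above additionally gives $\mmax{\sY}=\mmax{\sX}$. Since by assumption $\sY$ is a \emph{proper} subset of $\sX$, Definition \ref{def:degradable-states} applies directly with $\sY$ as the witness, showing that $\sX$ is degradable.

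There is no genuine obstacle: the entire argument is a two-line sandwich inequality followed by a direct appeal to the definitions. The only subtle point worth stating explicitly (rather than leaving to the reader) is the monotonicity $\mmax{\sY}\leq\mmax{\sX}$ for $\sY\subseteq\sX$, which is already noted earlier in the paper but is the one non-tautological ingredient used here.
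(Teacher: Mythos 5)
Your proof is correct and follows exactly the same route as the paper's: the sandwich $\is{\state}=\mmax{\sY}\leq\mmax{\sX}\leq\is{\state}$ using subset-monotonicity of $\mu_{\max}$ and the bound of Eq.~\eqref{eq:bdt}, with degradability then read off from Definition~\ref{def:degradable-states}. No issues.
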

\begin{proof}
    The claim follows straightforwardly by noting that since the maximally decodable set $\sY$ is a proper subset of $\sX \subset \state$ we must have that $\is{\state}= \mmax{\sY} \leq\mmax{\sX} \leq \is{\state}$.
\end{proof}

\begin{example}(\emph{Maximally decodable set need not be degradable})
It is clear that a maximal set of perfectly distinguishable states is maximally decodable but not degradable. One can then ask if a maximally decodable set with more than $d$ states has to be degradable. The answer is negative. One can see this for example by considering the trine states in qubit: let us fix some orthogonal basis $\{\ket{\varphi}, \ket{\varphi^\perp}\} \subset \complex^2$ and let us define
\begin{align*}
    \ket{\psi_1}&=\frac{1}{\sqrt{2}}\left( \ket{\varphi}+\ket{\varphi^\perp}\right) \\
    \ket{\psi_2}&=\frac{1}{\sqrt{2}}\left( \ket{\varphi}+e^{\frac{2\pi i}{3}}\ket{\varphi^\perp}\right) \\
    \ket{\psi_3}&=\frac{1}{\sqrt{2}}\left( \ket{\varphi}+e^{\frac{4\pi i}{3}}\ket{\varphi^\perp}\right) 
\end{align*}
and take $\sX = \{ \kb{\psi_j}{\psi_j}\}_{j=1}^3 \subset \state^q_2$. It is known that the optimal measurement for minimum error discrimination is just the POVM which measures along the directions of the state (see e.g. \cite{BaKuMoHi97}). In particular, if we set $A_j = \frac{2}{3}\kb{\psi_j}{\psi_j} $ it is straightforward to check that $A$ is a POVM and that $\sum_j A_j(s_j) =2= \is{\state^q_2}$ so that $\mmax{\sX}=2$. However, since none of the states in $\sX$ are orthogonal the set cannot be degradable.
\end{example}

Dually, we can also define maximal decodability for measurements.
\begin{definition}
A measurement $\oA$ that satisfies $\lmax{\oA} = \is{\state}$ has \emph{maximal decoding power}.
\end{definition}

Now we can give our conditions for calculating the information storability in specific theories and illustrate the use of them in some important examples.

\newpage

\begin{proposition}\label{prop:inf-stor-condition}
Let $\state$ be a state space.
\begin{enumerate}
    \item[i)] If there exists a state $s_0 \in \state$ such that $e(s_0) = f(s_0) =: \lambda_0$ for all $e,f \in \effect^{ext}_{ind}(\state)$, then $\is{\state} = 1/\lambda_0$ and it is attained by all indecomposable measurements.
    \item[ii)] Furthermore, if the maximal value of each $e \in \effect^{ext}_{ind}(\state)$ is attained on a unique pure state $s_e \in \state^{ext}$ such that $s_e \neq s_f$ for all $e \neq f$, $e,f \in \effect^{ext}_{ind}(\state)$, then $\is{\state}= 1/\lambda_0$ is attained only by indecomposable measurements.
\end{enumerate}
\end{proposition}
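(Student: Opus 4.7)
The plan is to decompose every effect of an arbitrary measurement into extreme indecomposable ones, use the state $s_0$ to fix the total normalization via $e(s_0)=\lambda_0$, and then track when a supremum-of-sums inequality becomes strict.

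The key preliminary observation is that every $e \in \effect^{ext}_{ind}(\state)$ satisfies $\sup_{s\in\state}e(s)=1$. Since $e$ is indecomposable it lies on an extreme ray of the positive dual cone, and the intersection of that ray with $\effect(\state)$ is a segment whose only nonzero endpoint is the unique element of the ray attaining supremum $1$; the extremality of $e$ in $\effect(\state)$ forces $e$ to be that endpoint. In finite dimension the positive dual cone is spanned by its extreme rays, so every effect $f$ admits a decomposition $f = \sum_{j}\mu_{j}e_j$ with $\mu_j\geq 0$ and $e_j \in \effect^{ext}_{ind}(\state)$.

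For part (i), fix an arbitrary $\oA = \{\oA_i\}_{i=1}^n \in \obs_n(\state)$ and decompose each $\oA_i = \sum_j \mu_{ij} e_j$ in the above form. Evaluating the identity $\sum_i \oA_i = u$ at $s_0$ and using $e_j(s_0)=\lambda_0$ yields $\sum_{i,j}\mu_{ij}=1/\lambda_0$. Combined with the preliminary observation,
\begin{equation*}
\lmax{\oA} = \sum_i \sup_{s\in\state}\oA_i(s) \leq \sum_{i,j}\mu_{ij}\sup_{s\in\state}e_j(s) = \sum_{i,j}\mu_{ij} = 1/\lambda_0,
\end{equation*}
so $\is{\state}\leq 1/\lambda_0$. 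When $\oA$ is indecomposable, each $\oA_i$ lies on a single extreme ray and thus equals $\lambda_i e_i$ for a unique $e_i \in \effect^{ext}_{ind}(\state)$; both inequalities above collapse to equalities, giving $\lmax{\oA}=1/\lambda_0$ and establishing part (i).

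For part (ii), assume $\oA$ is not indecomposable, so some $\oA_{i_0}$ does not lie on any single extreme ray. Then every decomposition $\oA_{i_0}=\sum_j\mu_{i_0 j}e_j$ must involve at least two distinct $e_j$'s with positive weights. Under the uniqueness hypothesis each such $e_j$ attains its maximum only at its own pure state $s_{e_j}$, and these maximizers are pairwise distinct, so no state simultaneously maximizes two of the relevant terms; hence $\sup_s\oA_{i_0}(s) < \sum_j\mu_{i_0 j}$. Propagating this strict inequality through the chain above yields $\lmax{\oA}<1/\lambda_0=\is{\state}$, so $\oA$ cannot be maximizing. The main conceptual step is setting up the extreme-indecomposable decomposition together with the identity $\sup_s e(s) = 1$ on such effects; the rest is a careful accounting of exactly when the supremum-of-sums inequality is strict.
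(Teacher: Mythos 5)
Your proof is correct and follows essentially the same route as the paper: decompose each effect into extreme indecomposable effects, use $u(s_0)=1$ together with $e(s_0)=\lambda_0$ to pin the total weight at $1/\lambda_0$, bound $\lmax{\oA}$ by that weight via $\sup_s e(s)=1$ on extreme indecomposable effects, and invoke the uniqueness of maximizers to make the bound strict for any decomposable measurement (you argue the contrapositive where the paper argues directly, which is immaterial). The only substantive difference is that you prove part (i) from scratch, whereas the paper imports it from \cite{HeLe22}; your self-contained argument is sound and is just the equality case of the same computation.
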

\begin{proof}
In \cite{HeLe22} it was shown that if the condition in \emph{i)} holds, then $\lmax{\state} = 1/\lambda_0$ and that $\lmax{\oA}= 1/ \lambda_0$ for all $\oA \in \obs_{ind}(\state)$. We will show that if also the condition in \emph{ii)} holds, then if $\lmax{\oA} = 1/\lambda_0$, then $\oA \in \obs_{ind}(\state)$.

First let us introduce the supremum norm $\no{\cdot}: \mathrm{span}(\effect(\state)) \to [0,1]$ on the set of effects defined by $\no{e} := \sup_{s \in \state} |e(s)| = \sup_{s \in \state} e(s)$ for all $e \in \effect(\state)$. It is clear that $0\leq \no{e} \leq 1$ for all effects $e \in \effect(\state)$ and if $e$ is an extreme effect then $\no{e} =1$. Also, in terms of the supermum norm we now have that $ \lmax{\oA} = \sum_{i=1}^n \no{\oA_i}$ for all $\oA \in \obs_n(\state)$.

 Let now $\oA$ be a measurement with $n$ outcomes. Each effect $\oA_i$ can be decomposed as $\oA_i = \sum_{j}^{r_i} \alpha^{(i)}_j e^{(i)}_j$ for some $r_i \in \nat$, some set of positive numbers $\{\alpha^{(i)}_j\}_{j=1}^{r_i} \subset \real_+$ and some distinct indecomposable extreme effects $\{e^{(i)}_j\}_{j=1}^{r_i} \subset \effect^{ext}_{ind}(\state)$ for each $i \in [n]$ \cite{KiNuIm10}. Given this decomposition of the effect $\oA_i$ we have that
\begin{align}\label{eq:norm-A_i}
    \no{\oA_i} = \no{\sum_{j=1}^{r_i} \alpha^{(i)}_j e^{(i)}_j } \leq  \sum_{j=1}^{r_i} \alpha^{(i)}_j \no{e^{(i)}_j } = \sum_{j=1}^{r_i} \alpha^{(i)}_j
\end{align}
for all $i \in [n]$ so that in particular 
\begin{align}\label{eq:lmax-A_i-1}
    \lmax{\oA} = \sum_{i =1}^n \no{\oA_i} \leq \sum_{i =1}^n \sum_{j=1}^{r_i} \alpha^{(i)}_j.
\end{align}
On the other hand, we see that
\begin{align}\label{eq:lmax-A_i-2}
    \frac{1}{\lambda_0} =  \frac{1}{\lambda_0} u(s_0) =  \frac{1}{\lambda_0} \sum_{i =1}^n \oA_i(s_0) = \frac{1}{\lambda_0} \sum_{i =1}^n \sum_{j=1}^{r_i} \alpha^{(i)}_j e^{(i)}_j(s_0) = \sum_{i =1}^n \sum_{j=1}^{r_i} \alpha^{(i)}_j
\end{align}

If now $\lmax{\oA} = \sum_{i=1}^n \no{\oA_i} = 1/\lambda_0$, then it follows from Eq. \eqref{eq:lmax-A_i-1} and Eq. \eqref{eq:lmax-A_i-2} that
\begin{align*}
    \sum_{i =1}^n \left[ \sum_{j=1}^{r_i} \alpha^{(i)}_j - \no{\oA_i}\right] = 0.
\end{align*}
On the other hand from Eq. \eqref{eq:norm-A_i} we see that we must actually have that $\no{\oA_i} = \sum_{j=1}^{r_i} \alpha^{(i)}_j$ for all $i \in [n]$. Since the maximal value of each $e^{(i)}_j \in \effect^{ext}_{ind}(\state)$ is attained by a unique pure state $s_{e^{(i)}_j} \in \state^{ext}$ such that $s_{e^{(i)}_j} \neq s_{e^{(i)}_k}$ for all $j \neq k$ for all $i \in [n]$, it follows that $r_i=1$ for all $i \in [n]$ so that $\oA$ is indecomposable.
\end{proof}

\begin{example}[Quantum theory]\label{ex:is-quantum}
In $d$-dimensional quantum theory $\state^q_d$ with a $d$-dimensioanl Hilbert space $\hi$  we can take $s_0 = \frac{1}{d} \id$ so that then $\tr{s_0 \kb{\varphi}{\varphi}}= 1/d$ for all unit vectors $\varphi \in \hi$. Thus, condition i) in Prop. \ref{prop:inf-stor-condition} is satisfied and then $\is{\state^q_d} = d$ and it is attained by all indecomposable measurements, i.e., all POVMs with rank-1 effects. On the other hand, since the supremum in  $\sup_{\varrho \in \state^q_d} \tr{\varrho \kb{\varphi}{\varphi}}$ attained by a unique state for each unit vector $\varphi \in \hi$, also the condition ii) in Prop. \ref{prop:inf-stor-condition} is satisfied and thus $\is{\state^q_d} = d$ is attained only by rank-1 POVMs.
\end{example}

\begin{example}[Classical theory]
Let's explore the conventional representation of a classical system in phase space \cite{PSAQT82}. In this context, each dimension within the phase space corresponds to a degree of freedom of the system. Consequently, the points in the phase space uniquely determine the state of the system. When viewed as a statistical (operational) theory, this representation necessitates an extension of the concept of states to encompass all probability distributions across the phase space. The concept of a \emph{simplex} precisely encapsulates this extension. If $\Omega$ denotes a (finite) phase space with $d+1$ points, the set of probability distributions on $\Omega$ forms a $d$-simplex. In other words, it constitutes the convex hull of its $d+1$ affinely independent extreme points. Therefore, a theory is deemed classical if and only if its state space is a simplex.

We denote the state space of a $d$-dimensional classical theory by $\state^{cl}_d$ and we may represent it as a $d-1$-simplex as
\begin{equation}
    \state^{cl}_d = \left\lbrace p=(p_1, \ldots, p_d) \in \real^d \, | \, \forall i: \ p_i \geq 0, \ \sum_{i=1}^d p_i =1 \right\rbrace .
\end{equation}
We note that the pure states in this case are the deterministic probability distributions, i.e., states $p$ which have $p_i=1$ for some $i \in [d]$, and that every mixed state can be written as a unique convex composition of pure states. We denote the pure states by $x_i$ where $i \in [d]$. The effect space $\effect(\state^{cl}_d)$ is then generated by linear functions $b_j: \real^d \to [0,1]$ defined as $b_j(x_i) = \delta_{ij}$ for all $i,j \in [d]$, where $\delta$ is the Kronecker delta. More precisely, any effect $e \in \effect(\state^{cl}_d)$ can be written as $e = \sum_j \nu_{j} b_j$ for some coefficients $\{\nu_{j}\}_{j} \subset [0,1]$. In particular, we have the unit effect $u = \sum_j b_j$. The extreme effects are of the form $\sum_{j \in J} b_j$ for some $J \subset [d]$ and the indecomposable effects are all multiples of some $b_j$. It follows that $\effect^{ext}_{ind}(\state^{cl}_d) = \{b_j\}_{j=1}^d$.

Let us denote $s_0 = \frac{1}{d} \sum_{i=1}^{d} x_i$. It follows that for all extreme indecomposable effects $b_j$ we have that $b_j(s_0) = 1/d$ for all $j \in [d]$. Thus, the condition i) in Prop. \ref{prop:inf-stor-condition} is satisfied and thus $\is{\state^{cl}_d} = d$ and it is attained by all indeocomposable measurements, i.e., measurements with effects proportional to some $b_j$. On the other hand, for each $j \in [d]$ clearly the supremum in $\sup_{s \in \state^{cl}_d} b_j(s)$ is attained by a unique state $x_j$ so that also the condition ii) in Prop. \ref{prop:inf-stor-condition} is satisfied and thus $\is{\state^{cl}_d} = d$ is attained only by indecomposable measurements.
\end{example}

\begin{example}[Polygon theories]\label{ex:is-polygon}

In alignment with \cite{JaGoBaBr11}, we define a state space denoted as $\poly_n$ for a regular $n$-sided polygon, embedded in $\real^3$. This state space is defined as the convex hull of its $n$ extreme points, represented by:
\begin{equation}
s_k=
\begin{pmatrix}
r_n \cos\left(\dfrac{2 k \pi }{n}\right) \\
r_n \sin\left(\dfrac{2 k \pi }{n}\right) \\
1
\end{pmatrix}, \quad k \in [n]. 
\end{equation}
Here, $r_n = \sqrt{\sec\left( \frac{\pi}{n}\right)}$. Within this framework, the effect space encompasses the zero effect, denoted as $o = (0, 0, 0)^T$, and the unit effect, denoted as $u = (0, 0, 1)^T$. Let us define $s_0 = (0,0,1)^T$. The symmetry properties of the state space depend on the parity of $n$; in particular the state space may or may not exhibit point-symmetry around $s_0$. This gives rise to distinct structural characteristics in the effect space $\effect(\poly_n)$ for odd and even values of $n$.

For even $n$, the non-trivial extreme points are expressed as:
\begin{equation}
e_k= \dfrac{1}{2}
\begin{pmatrix}
r_n \cos\left(\dfrac{(2k-1) \pi }{n}\right) \\
r_n \sin\left(\dfrac{(2k-1) \pi }{n}\right) \\
1
\end{pmatrix}, \quad k \in [n],
\end{equation}
resulting in $\effect(\poly_n) = \conv{\left(\{o,u,e_1, \ldots,e_n\}\right)}$. Notably, all non-trivial extreme effects lie on a common hyperplane determined by the effects $e$ for which $e(s_0) = 1/2$. The indecomposable effects are expressed as multiples of $e_k$ for some $k \in [n]$.

In the case of odd $n$, the effect space exhibits $2n$ non-trivial extreme effects:
\begin{equation}
g_k
= \dfrac{1}{1+r^2_n}
\begin{pmatrix}
r_n \cos\left(\dfrac{2k \pi }{n}\right) \\
r_n \sin\left(\dfrac{2k \pi }{n}\right) \\
1
\end{pmatrix}, \quad \quad f_k = u-g_k
\end{equation}
for $k \in [n]$. In this scenario, $\effect(\poly_n) = \conv{\left(\{o,u, g_1, \ldots, g_n, f_1 ,\ldots, f_n\}\right)}$, and the non-trivial extreme effects are distributed across two distinct planes determined by all points $g$ and $f$ such that $g(s_0) = \frac{1}{1+r^2_n}$ and $f(s_0) = \frac{r^2_n}{1+r^2_n}$. The indecomposable effects are expressed as multiples of $g_k$ for some $k \in [n]$. It is noteworthy that $\poly_3$ is isomorphic to the classical state space $\state^{cl}_3$ as it corresponds to a triangle, i.e., a $2$-simplex.

Since $\effect^{ext}_{ind}(\poly_n) = \{e_k\}_{k=1}^n$ for even $n$ and $\effect^{ext}_{ind}(\poly_n) = \{g_k\}_{k=1}^n$ for odd $n$ and $e_k(s_0) =1/2$ and $g_k(s_0)=1+r^2_n$ for all $k \in [n]$, we see by condition i) in Prop. \ref{prop:inf-stor-condition} that $\is{\poly_n}= 2$ for even $n$ and $\is{\poly_n}=1+r^2_n=1+\sec(\pi/n)$ for odd $n$ and that these values are attained for all indecomposable measurements. Here we note that for odd $n$ we have that $\is{\poly_n} > 2 = \dop_{\poly_n}$ so that the odd polygon theories manifest super information storability. For pentagon state space this was first recognized in \cite{MaKi18} and for any odd $n$ this was shown in \cite{HeLe22}. 

Furthermore, for odd $n$ one can readily confirm that the supremum in $\sup_{i \in [n]} g_k(s_i)$ is attained by a unique state $s_k$ for all $k \in [n]$ so that by the condition ii) in Prop. \ref{prop:inf-stor-condition} the value $\is{\poly_n}=1+\sec(\pi/n)$ is attained only by indecomposable measurements. On the other hand one can easily check that in the even case $e_k(s_k)=e_k(s_{k-1})=1$ so that for even $n$ condition ii) in Prop. \ref{prop:inf-stor-condition} is not satisfied. For example, in square state space ($n=4$) we can define a decomposable measurement $\oA$ by setting $\oA_1 =1/2(e_1 + e_2)$ and $\oA_2 = 1/2(e_3+e_4)$ for which now clearly $\lmax{\oA}=2=\is{\poly_4}$.
\end{example}

\section{Optimal strategy in the information-storing game}

Let us now focus on solving the information storing game that was presented previously. As a reminder: the game consists of you choosing $n\in \nat$, states $\sX= \{s_i\}_{i=1}^n$ and an $n$-outcome measurement $\oA \in \obs_n(\state)$ such that in the case of you successfully identifying a randomly given state from $\sX$ by using $\oA$ you are awarded $n$ utiles and in the case when you get the state wrong you are given $w \leq 0$ utiles (which is fixed by the game). As presented before in Eq. \eqref{eq:reward}, then the average reward of the information storing game is given by
\begin{equation}\label{eq:expected-reward}
   \expec_w(\sX,\oA) = w + \left( 1- \frac{w}{n} \right) \sum_{i=1}^n \oA_i\left(s_i\right) \, .
\end{equation}
As a player one wants to maximize this reward function by choosing the best $n$, $\sX$ and $\oA$.

\subsection{The case with no penalty}

Let us first solve the simplest case when $w=0$ so that there is no penalty. In this case the game reduces to finding the optimal strategy for maximizing the product of the number of states that is used to encode a message and the optimal success probability of decoding the message by performing a minimum-error discrimination task on the states. In other words, we are looking for a strategy that can be used to attain the information storability $\is{\state}$ of the theory:
\begin{equation}
    \expec_0 := \sup_{n \in \nat} \sup_{\oA \in \obs_n(\state)} \sup_{\sX \subset \state, |\sX|=n}  \expec_0(\sX,\oA) = \is{\state} \, .
\end{equation}
Thus, the optimal value is attained for all $n \in \nat$ such that $\isn{\state} = \is{\state}$. We have earlier introduced $\nsat_{\state}$ as the smallest integer $n$ such that $\isn{\state}=\is{\state}$. Furthermore, since $\isn{\state} = \is{\state}$ for all $n \geq \nsat_{\state}$, it does not make a difference for the expected reward if we choose more states than $ \nsat_{\state}$ as long as they can be used to attain $\is{\state}$. We can present this result as follows:

\begin{proposition}[No penalty]
The maximal average expected reward $\expec_w$ in the information storing game in a theory in the case of no penalty, $w=0$, is 
\begin{equation}
    \expec_0 = \is{\state} \, .
\end{equation}
This is attained with $n \geq \nsat_{\state}$ maximally decodable states and with an $n$-outcome measurement with maximal decoding power.
\end{proposition}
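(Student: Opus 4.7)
The plan is to chain together the definitions: when $w=0$ the reward in Eq.~\eqref{eq:expected-reward} collapses to $\expec_0(\sX,\oA) = \sum_{i=1}^n \oA_i(s_i)$, so the triple supremum over $n$, $\sX$ with $|\sX|=n$, and $\oA \in \obs_n(\state)$ factors cleanly. First I would fix $n$ and $\sX$, and observe that $\sup_\oA \expec_0(\sX,\oA) = \mmax{\sX}$ directly from the definition of encoding power. Next, taking the supremum over $\sX \subset \state$ with $|\sX|=n$ gives $\isn{\state}$, and then the outer supremum over $n \in \nat$ yields $\sup_n \isn{\state}$. Since $\isn{\state}$ is nondecreasing in $n$ and bounded above by $\is{\state}$, and since by the definition of $\nsat_{\state}$ the value $\is{\state}$ is attained for $n = \nsat_{\state}$, this last supremum equals $\is{\state}$, establishing the value of $\expec_0$.

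For the attainability claim I would then argue as follows. Pick any $n \geq \nsat_{\state}$; by monotonicity and the definition of $\nsat_{\state}$ we have $\isn{\state} = \is{\state}$, so there exists a set $\sX \subset \state$ with $|\sX|=n$ satisfying $\mmax{\sX} = \is{\state}$, i.e., $\sX$ is maximally decodable. Because $\obs_n(\state)$ is compact, the supremum defining $\mmax{\sX}$ is attained by some $\oA \in \obs_n(\state)$ with $\sum_i \oA_i(s_i) = \is{\state}$. It remains to check that this $\oA$ has maximal decoding power. But bounding each term by its supremum over $\state$ gives
\begin{equation*}
\is{\state} = \sum_{i=1}^n \oA_i(s_i) \leq \sum_{i=1}^n \sup_{s\in\state} \oA_i(s) = \lmax{\oA} \leq \is{\state},
\end{equation*}
where the final inequality is Eq.~\eqref{eq:bdt} combined with the dual characterization $\is{\state} = \sup_{\oA \in \obs(\state)} \lmax{\oA}$. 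Hence $\lmax{\oA} = \is{\state}$, so $\oA$ has maximal decoding power.

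There is essentially no hard step here; the statement is really a bookkeeping consequence of the definitions of $\mmax{\cdot}$, $\lmax{\cdot}$, $\isn{\state}$, $\is{\state}$, and $\nsat_{\state}$, together with Lemma~\ref{lemma:mmax-lmax-subset} (used implicitly to sandwich $\mmax{\sX}$ between itself and $\lmax{\oA}$) and the compactness of $\obs_n(\state)$ ensuring the supremum is attained. The only place one needs to be careful is in verifying that the measurement realizing the optimal encoding for a maximally decodable $\sX$ automatically has maximal decoding power; this is where the two-sided sandwich displayed above is the key observation.
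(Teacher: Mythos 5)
Your proposal is correct and follows essentially the same route as the paper, which likewise just unwinds the chain of suprema $\expec_0 = \sup_n \sup_{\oA}\sup_{\sX} \sum_i \oA_i(s_i) = \sup_n \isn{\state} = \is{\state}$ and invokes the definition of $\nsat_{\state}$ together with the monotonicity of $\isn{\state}$ in $n$. Your explicit sandwich $\is{\state} = \sum_i \oA_i(s_i) \leq \lmax{\oA} \leq \is{\state}$ verifying that the optimal measurement automatically has maximal decoding power is a small but welcome addition of detail that the paper leaves implicit.
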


As was pointed out before, it is known that there are state spaces $\state$ where $\is{\state} > \dop_{\state}$. Thus, \emph{the information storing game without penalty can be seen as an operational test for super information storability}. 

\subsection{The case with penalty}

Let now $w <0$ in Eq. \eqref{eq:expected-reward}. As was already pointed out in Sec. \ref{sec:encoding-decoding} the maximizing set of states $\sX$ and the maximizing measurement $\oA$ should now be nondegradable: indeed, if $\sX$ would be degradable, then there would exist a subset $\sY \subset \sX$ with $m<n$ states which would preserve the value of $\mu_{\max}$ but nevertheless increase the expected reward since then $-\frac{w}{m} > -\frac{w}{n} > 0$. On the other hand, if the maximizing measurement $\oA$ with $n$ outcomes was degradable, then there would be another measurement $\oB$ with $k < n$ outcomes that would preserve the value of $\lambda_{\max}$ but nevertheless increase the expected reward since then again $-\frac{w}{k} > -\frac{w}{n} > 0$. Thus, we have shown the following result:

\begin{proposition}\label{prop:optimal-no-penalty}
The optimal strategy in the information-storing game is attained by using a nondegradable measurement and nondegradable set of states.
\end{proposition}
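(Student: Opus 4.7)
The plan is to argue by contradiction, exploiting the structural observation that in the reward formula
\[
\expec_w(\sX,\oA) = w + \left(1 - \frac{w}{n}\right)\sum_{i=1}^n \oA_i(s_i),
\]
with $w < 0$, the coefficient $-w/n$ is strictly increasing as $n$ decreases. Thus, whenever we can preserve the value of the sum $\sum_i \oA_i(s_i)$ while shrinking $n$, the expected reward strictly increases. The entire argument is essentially a matter of translating degradability of states (resp.\ measurements) into such a size-reducing replacement, which is exactly what Definitions \ref{def:degradable-states} and \ref{def:degradable-measurement} provide.

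First I would treat the states. Suppose an optimal pair $(\sX,\oA)$ has $\sX=\{s_1,\dots,s_n\}$ degradable. By Definition \ref{def:degradable-states}, there is a proper subset $\sY\subset\sX$ with $m:=|\sY|<n$ and $\mmax{\sY}=\mmax{\sX}$. Choose any measurement $\oB\in\obs_m(\state)$ attaining $\mmax{\sY}$ (which exists, as noted after the definition of encoding power). Using Eq.~\eqref{eq:avg-reward-state}, a direct computation gives
\[
\expec_w(\sY,\oB) - \expec_w(\sX,\oA) \;\geq\; w\,\mmax{\sX}\left(\tfrac{1}{n}-\tfrac{1}{m}\right) \;>\; 0,
\]
since $m<n$ forces $\tfrac{1}{n}-\tfrac{1}{m}<0$, while $w<0$ and $\mmax{\sX}\geq 1>0$. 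This contradicts the optimality of $(\sX,\oA)$, so the optimal set of states must be nondegradable.

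For the measurement, the argument is dual. Suppose the optimal $\oA\in\obs_n(\state)$ is degradable; then by Definition \ref{def:degradable-measurement} some measurement $\oB\in\obs_k(\state)$ with $k<n$ is obtained by merging effects of $\oA$ and satisfies $\lmax{\oB}=\lmax{\oA}$. Pick $k$ states $\{t_j\}_{j=1}^k$ with $\sum_j \oB_j(t_j)=\lmax{\oB}$ (such a set exists because suprema are attained on the compact $\state$). Applying the reward formula in the form
\[
\expec_w(\oA)=\lmax{\oA}+w\bigl(1-\lmax{\oA}/n\bigr),
\]
the same comparison yields $\expec_w(\{t_j\},\oB) > \expec_w(\oA)$ in the optimal configuration, again a contradiction.

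Since there is essentially no calculation beyond the elementary sign argument above, I do not expect a genuine obstacle; the only thing to be slightly careful about is that in each step one must explicitly exhibit the reduced pair (subset of states with its optimal measurement, or merged measurement with a suitable set of witnessing states) so that the strict inequality comes from the same conserved quantity $\mmax{\sX}$ or $\lmax{\oA}$ divided by strictly smaller integers. The combination of the two reductions establishes that any optimal strategy uses a nondegradable set of states together with a nondegradable measurement.
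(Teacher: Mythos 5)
Your argument is correct and is essentially identical to the paper's own justification (given in the paragraph preceding the proposition): both proceed by contradiction, replacing a degradable set of states (resp.\ measurement) by the smaller subset (resp.\ merged measurement) that preserves $\mu_{\max}$ (resp.\ $\lambda_{\max}$), and noting that $-w/m > -w/n > 0$ for $m<n$ and $w<0$ strictly increases the expected reward. Your version is in fact slightly more explicit than the paper's, since you exhibit the witnessing measurement/states and verify the sign of the difference.
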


Let us now continue with the optimization. It is clear that when we perform the maximization over $\sX$ and $\oA$ where they are assumed to have $n$ states and $n$ outcomes respectively, then we get
\begin{align}
    \expec_w(n) &:= \sup_{\oA \in \obs_n(\state)} \sup_{\sX \subset \state, |\sX|=n}  \expec_w(\sX,\oA) \nonumber \\
    &= w + \left( 1- \frac{w}{n} \right)  \sup_{\oA \in \obs_n(\state)} \sup_{\sX \subset \state, |\sX|=n} \sum_{i=1}^n \oA_i\left(s_i\right) \nonumber \\
    &=  w + \left( 1- \frac{w}{n} \right)  \sup_{\oA \in \obs_n(\state)} \lmax{\oA} \nonumber \\
    &=  w + \left( 1- \frac{w}{n} \right) \isn{\state} \nonumber \\
    &= w + \isn{\state} -w  \frac{\isn{\state}}{n} \, . \label{eq:optimal-n}
\end{align}
What remains to be chosen is $n$ that maximizes the above expression to get the maximal average expected reward $\expec_w := \sup_{n \in \nat} \expec_w(n)$. We note that both the terms $\isn{\state}$ and $-w  \frac{\isn{\state}}{n}$ are now positive so that we need to maximize both $\isn{\state}$ and $\frac{\isn{\state}}{n}$ depending on the value of $w$.

\subsubsection{Theories without super information storability}

Let first $\state$ be a state space which does not have super information storability so that $\is{\state} = \dop_{\state}$. In this case we see that for all $n \leq \dop_{\state}$ we have that $\isn{\state} = n$ and thus $\frac{\isn{\state}}{n}= 1$. On the other hand for all $n > \dop_{\state}$ we have that $\isn{\state}= \dop_{\state}$ and $\frac{\isn{\state}}{n} = \frac{\dop_{\state}}{n}< 1$. Hence, the optimal value for $n$ in this case is $\dop_{\state}$ since it maximizes both  $\isn{\state}$ and $\frac{\isn{\state}}{n}$ irrespective of $w$. Thus, we have the following:

\begin{proposition}[No super information storability]
The maximal average expected reward $\expec_w$ in the information storing game in a theory without super information storability is 
\begin{equation}
    \expec_w = \dop_{\state} 
\end{equation}
irrespective of the exact value of the penalty $w < 0$. This is attained only with $\dop_{\state}$ perfectly distinguishable states and with a measurement that distinguishes them.
\end{proposition}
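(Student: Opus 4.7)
The plan is to read off the optimal $n$ directly from Eq.~\eqref{eq:optimal-n}, which already isolates the only two quantities that $n$ influences: $\isn{\state}$ and the ratio $\isn{\state}/n$. Since $w<0$, the coefficient $-w/n$ in front of $\isn{\state}$ is positive, so maximizing $\expec_w(n)$ reduces to simultaneously maximizing both $\isn{\state}$ and $\isn{\state}/n$. My first step is therefore to tabulate these two sequences under the hypothesis $\is{\state}=\dop_{\state}$.

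The key observation is a clean dichotomy at $n=\dop_{\state}$. For $n\le \dop_{\state}$, pick any $n$ perfectly distinguishable states $\sX$ and a measurement distinguishing them; this gives $\mmax{\sX}=n$, so $\isn{\state}=n$ and hence $\isn{\state}/n=1$. For $n>\dop_{\state}$, the monotonicity $\isn{\state}\le\ism{\state}$ together with $\is{\state}=\dop_{\state}$ forces $\isn{\state}=\dop_{\state}$, and then $\isn{\state}/n=\dop_{\state}/n<1$. Plugging $n=\dop_{\state}$ into Eq.~\eqref{eq:optimal-n} gives
\begin{equation*}
\expec_w(\dop_{\state})=w+\dop_{\state}-w\cdot 1=\dop_{\state},
\end{equation*}
while for any other $n$ at least one of $\isn{\state}<\dop_{\state}$ or $\isn{\state}/n<1$ holds strictly, so using that $-w>0$ one gets $\expec_w(n)<\dop_{\state}$. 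Taking the supremum over $n$ yields $\expec_w=\dop_{\state}$.

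For the attainability clause, I would argue as follows. Any optimal strategy must hit $n=\dop_{\state}$ because, by the analysis above, every other $n$ is strictly suboptimal (here the strictness of $w<0$ is crucial, since if $w=0$ any $n\ge\nsat_{\state}$ would do). At $n=\dop_{\state}$, one needs $\sum_i \oA_i(s_i)=\dop_{\state}$, i.e., $\oA_i(s_i)=1$ for every $i$, which by the normalization of measurements is exactly the statement that $\{s_i\}$ is a set of $\dop_{\state}$ perfectly distinguishable states and $\oA$ distinguishes them.

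I do not expect any real obstacle: the content of Proposition~\ref{prop:optimal-no-penalty} and the explicit form~\eqref{eq:optimal-n} do all the work, and the hypothesis $\is{\state}=\dop_{\state}$ makes the comparison of $\isn{\state}$ values trivial. The only subtlety worth flagging is that the argument for strict suboptimality away from $n=\dop_{\state}$ uses $w<0$ (not merely $w\le 0$), which is precisely why this regime behaves differently from the no-penalty case handled just before.
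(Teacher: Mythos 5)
Your proof is correct and follows essentially the same route as the paper: both read the answer off from Eq.~\eqref{eq:optimal-n} by observing that, when $\is{\state}=\dop_{\state}$, the choice $n=\dop_{\state}$ simultaneously maximizes $\isn{\state}$ and $\isn{\state}/n$, with strictness elsewhere because $-w>0$. You are in fact slightly more explicit than the paper on the uniqueness clause (forcing $\oA_i(s_i)=1$ for all $i$ and hence perfect distinguishability), which is a welcome addition.
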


\subsubsection{Theories with super information storability}

Let us now turn to theories which do have super information storability. Thus, let now $\state$ be a state space such that $\is{\state} > \dop_{\state}$. 

First, let us note that one can always use a strategy which results in expected reward of $\dop_{\state}$ utiles by using $\dop_{\state}$ perfectly distinguishable states and a measurement that distinguishes them: namely, then $\isn{\state} = \dop_{\state}$ and $\frac{\isn{\state}}{n}= 1$ for $n = \dop_{\state}$ so that from Eq. \eqref{eq:optimal-n} we see that 
\begin{equation} \label{eq:adv-over-perf-discr}
    \expec_w \geq \expec_w(\dop_{\state}) = \dop_{\state} \, .
\end{equation}

Let us now look for strategies that would result in a higher reward. Thus, we will look for cases when $\expec_w(n) > \dop_{\state}$. From Eq. \eqref{eq:optimal-n} we see that this can happen only when 
\begin{equation}
    w \left( 1-\frac{\isn{\state}}{n}\right) > \dop_{\state} - \isn{\state} \, .
\end{equation}

Now if we have $n \leq \dop_{\state}$, then the LHS of the above equation is zero since $\isn{\state} = n$ so that the above equation is satisfied only when $\isn{\state} > \dop_{\state}$ which is impossible for $n \leq \dop_{\state}$. Thus, we can focus on looking at cases when $n > \dop_{\state}$. For $n > \dop_{\state}$ we have that $\isn{\state} < n$ so that the LHS of the above expression is strictly negative. Then it can be satisfied only if $\dop_{\state} - \isn{\state} < 0$, i.e., only when we are witnessing super information storability. Thus, if we denote by $\mdet_{\state}$ the minimum number $m \in \nat$ such that super information storability is detected, i.e., when $\ism{\state} > \dop_{\state}$, we see that $\dop_{\state} - \isn{\state} < 0$ only for all $n \geq \mdet_{\state}$. Thus, we may further restrict to look only for $n \geq \mdet_{\state}$.

Solving for $w$ in Eq. \eqref{eq:adv-over-perf-discr} we get the following result:
\begin{proposition}
     In the information storing game with penalty $w<0$ an advantage over the perfect discrimination strategy can be attained with $n$ states $\sX$ and an $n$-outcome measurement $\oA$ if and only if 
     \begin{equation}
        w > n\frac{\dop_{\state}- \isn{\state}}{n- \isn{\state}} \quad \mathit{and} \quad n \geq \mdet_{\state} \, .
    \end{equation}
    Furthermore, then we can and must choose $\sX$ and $\oA$ such that $\mmax{\sX} = \lmax{\oA | \sX} = \isn{\state}$ in order to see the advantage.
\end{proposition}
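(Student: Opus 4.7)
The plan is to read the statement as an algebraic consequence of Eq.~\eqref{eq:optimal-n}, which already tells us that the best reward obtainable with $n$ states and an $n$-outcome measurement equals $\expec_w(n) = w + \isn{\state}(1 - w/n)$. The condition that an advantage over perfect discrimination is achievable with some $\sX,\oA$ of size $n$ is equivalent to $\expec_w(n) > \dop_{\state}$, because the supremum defining $\expec_w(n)$ is attained (via Lemma~\ref{lemma:mmax-lmax-subset} together with compactness of the $n$-tuples of states and of $\obs_n(\state)$). Clearing denominators, this rearranges to
\begin{equation}\label{eq:adv-alg}
w\,(n - \isn{\state}) > n\,(\dop_{\state} - \isn{\state}).
\end{equation}
All subsequent work is careful sign tracking, since $w<0$ reverses inequalities under division.

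First I would dispose of the regimes that cannot yield an advantage. For $n \leq \dop_{\state}$, $n$ perfectly distinguishable states exist, so $\isn{\state}=n$: the left-hand side of~\eqref{eq:adv-alg} vanishes while the right-hand side is non-negative, ruling out strict inequality. For $\dop_{\state} < n < \mdet_{\state}$, the definition of $\mdet_{\state}$ gives $\isn{\state}\leq\dop_{\state}$, so the right-hand side is still non-negative; meanwhile $\isn{\state}<n$ (else $n>\dop_{\state}$ states would be perfectly distinguishable), so the left-hand side is strictly negative. Only $n \geq \mdet_{\state}$ remains.

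In the admissible regime $n \geq \mdet_{\state}$ one has $\isn{\state}>\dop_{\state}$, making the right-hand side of~\eqref{eq:adv-alg} strictly negative, together with $n - \isn{\state}>0$ as above. Dividing by this positive factor preserves the inequality and produces
\begin{equation*}
w > n\,\frac{\dop_{\state} - \isn{\state}}{n - \isn{\state}},
\end{equation*}
which is the stated threshold; the equivalences are reversible, establishing the if-and-only-if.

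For the final clause, existence of optimal witnesses (``can'') follows because $\isn{\state} = \sup_{|\sX|=n}\mmax{\sX}$ is attained by some $\sX$, and Lemma~\ref{lemma:mmax-lmax-subset} then supplies an $\oA$ with $\lmax{\oA \mid \sX} = \mmax{\sX} = \isn{\state}$; at this pair $\expec_w(\sX,\oA)$ equals $\expec_w(n)$, which exceeds $\dop_{\state}$ by the algebra above. Necessity (``must'') reflects that $\expec_w(\sX,\oA) = w + (1-w/n)\sum_i \oA_i(s_i)$ is strictly increasing in $\sum_i \oA_i(s_i)$ since $1-w/n>0$, and is bounded by $\sum_i \oA_i(s_i) \leq \mmax{\sX} \leq \isn{\state}$; reaching the optimal value $\expec_w(n)$ therefore forces $\sum_i \oA_i(s_i) = \mmax{\sX} = \isn{\state}$, and then $\lmax{\oA \mid \sX} = \isn{\state}$ follows from Lemma~\ref{lemma:mmax-lmax-subset}. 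The main obstacle I anticipate is the bookkeeping around the boundary case $n = \mdet_{\state}$ in~\eqref{eq:adv-alg}, where both sides are negative and it is easy to mis-apply the sign flip when dividing.
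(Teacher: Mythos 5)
Your argument is correct and follows essentially the same route as the paper, which obtains the claim by rearranging the condition $\expec_w(n) > \dop_{\state}$ from Eq.~\eqref{eq:optimal-n} into $w\left(1-\isn{\state}/n\right) > \dop_{\state}-\isn{\state}$ and tracking signs through the cases $n \leq \dop_{\state}$, $\dop_{\state} < n < \mdet_{\state}$, and $n \geq \mdet_{\state}$ before dividing by the positive factor $n-\isn{\state}$. You are in fact more explicit than the paper on the ``furthermore'' clause; just note that your necessity argument shows $\mmax{\sX}=\lmax{\oA\,|\,\sX}=\isn{\state}$ is forced for attaining the optimum $\expec_w(n)$ rather than for merely exceeding $\dop_{\state}$ (when $w$ is strictly above the threshold a slightly suboptimal pair still yields an advantage), which is arguably the only reading under which that clause of the proposition is literally true.
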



To see more specifically for which values of $n$ it is possible, let us try to solve for $n$ in Eq. \eqref{eq:adv-over-perf-discr}. By rearranging some terms we find that it is possible only when:
\begin{equation}
    n \cdot \frac{\isn{\state}-\dop_{\state}+w}{w} < \isn{\state} \, .
\end{equation}
We see that there will be different cases wether $ \isn{\state} - \dop_{\state} +w$ is positive or negative. Simply solving for $n$ in these cases leads to the next result:

\begin{proposition}
    In the information storing game with penalty $w<0$ an advantage over the perfect discrimination strategy can be attained with $n$ states $\sX$ and an $n$-outcome measurement $\oA$ if and only if 
    \begin{equation}
        \mdet_{\state} \leq n
    \end{equation} 
    when $ \isn{\state} - \dop_{\state} \geq -w$, and 
    \begin{equation}
     \mdet_{\state} \leq n < \frac{\isn{\state} w}{\isn{\state}-\dop_{\state}+w}
    \end{equation} 
    when $ \isn{\state} - \dop_{\state} < -w$.
    Furthermore, then we can and must choose $\sX$ and $\oA$ such that $\mmax{\sX} = \lmax{\oA | \sX} = \isn{\state}$ in order to see the advantage.
\end{proposition}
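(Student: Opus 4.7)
The plan is to start from the reformulated advantage condition derived in the paragraph preceding the statement, namely that an advantage over the perfect discrimination strategy is obtained if and only if
\begin{equation*}
n \geq \mdet_{\state} \quad \text{and} \quad n \cdot \frac{\isn{\state}-\dop_{\state}+w}{w} < \isn{\state}.
\end{equation*}
All that remains is to convert the second inequality into explicit upper/lower bounds on $n$. Since $w<0$ is fixed while the sign of $\isn{\state}-\dop_{\state}+w$ depends on how strongly super information storability manifests at level $n$, the natural move is a case split on this sign.

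First I would treat the case $\isn{\state}-\dop_{\state}+w\geq 0$, i.e., $\isn{\state}-\dop_{\state}\geq -w$. Here the numerator is non-negative and $w<0$, so the left-hand side is non-positive; since $\isn{\state}>0$, the inequality holds for every admissible $n$. Combined with the already-established lower bound $n\geq \mdet_{\state}$ this gives exactly the first regime of the proposition. Next I would treat the case $\isn{\state}-\dop_{\state}+w<0$, i.e., $\isn{\state}-\dop_{\state}<-w$. Now $(\isn{\state}-\dop_{\state}+w)/w>0$ (a ratio of two negatives), so multiplying both sides of the inequality by the positive quantity $w/(\isn{\state}-\dop_{\state}+w)$ preserves direction and yields $n<\isn{\state}w/(\isn{\state}-\dop_{\state}+w)$. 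Together with $n\geq \mdet_{\state}$ this is the second regime. The boundary case $\isn{\state}-\dop_{\state}+w=0$ falls harmlessly under the first regime.

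For the \emph{furthermore} clause, I would observe that once $n$ is fixed the reward $\expec_w(\sX,\oA)$ given by Eq.~\eqref{eq:reward} is strictly increasing in $\sum_i \oA_i(s_i)$, because the coefficient $1-w/n$ is positive when $w<0$. Consequently the per-$n$ supremum $\expec_w(n)$ computed in Eq.~\eqref{eq:optimal-n} can be attained only by pairs $(\sX,\oA)$ that achieve $\sum_i \oA_i(s_i)=\isn{\state}$; by Lemma~\ref{lemma:mmax-lmax-subset} this is equivalent to $\mmax{\sX}=\lmax{\oA\,|\,\sX}=\isn{\state}$. Since an advantage over $\dop_{\state}$ already requires $n\geq \mdet_{\state}$, and at such $n$ the advantage is $\expec_w(n)-\dop_{\state}$ which is maximized only at the per-$n$ optimum, this is both necessary and sufficient for realizing the advantage.

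The only real subtlety is bookkeeping of signs: one must remember that $w<0$ flips inequalities whenever it is moved across, and that $\isn{\state}-\dop_{\state}+w$ can be zero, positive, or negative independently of the sign of $w$. Beyond that, the argument is a purely algebraic rewriting of the single inequality supplied above, so I do not anticipate any substantive obstacle.
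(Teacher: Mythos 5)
Your proposal is correct and follows essentially the same route as the paper: starting from the rearranged condition $n\cdot\frac{\isn{\state}-\dop_{\state}+w}{n}\big/ 1$ — more precisely $n \cdot \frac{\isn{\state}-\dop_{\state}+w}{w} < \isn{\state}$ together with $n \geq \mdet_{\state}$ — and splitting on the sign of $\isn{\state}-\dop_{\state}+w$, which is exactly what the paper means by ``simply solving for $n$ in these cases.'' Your sign bookkeeping is right, and your justification of the ``furthermore'' clause via the positivity of $1-w/n$ and Lemma~\ref{lemma:mmax-lmax-subset} matches (and slightly elaborates on) the paper's unargued assertion.
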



The previous two results can also be used, if possible, to find a penalty for which the optimal strategy is something other than perfect discrimination strategy. We illustrate this in Figs. \ref{fig:P5xS2} and \ref{fig:P5+P7} for different combinations of odd polygon state space where we know that super information storability is manifested (see Example \ref{ex:is-polygon}).

\begin{figure}
\centering
\includegraphics[width=0.8\textwidth]{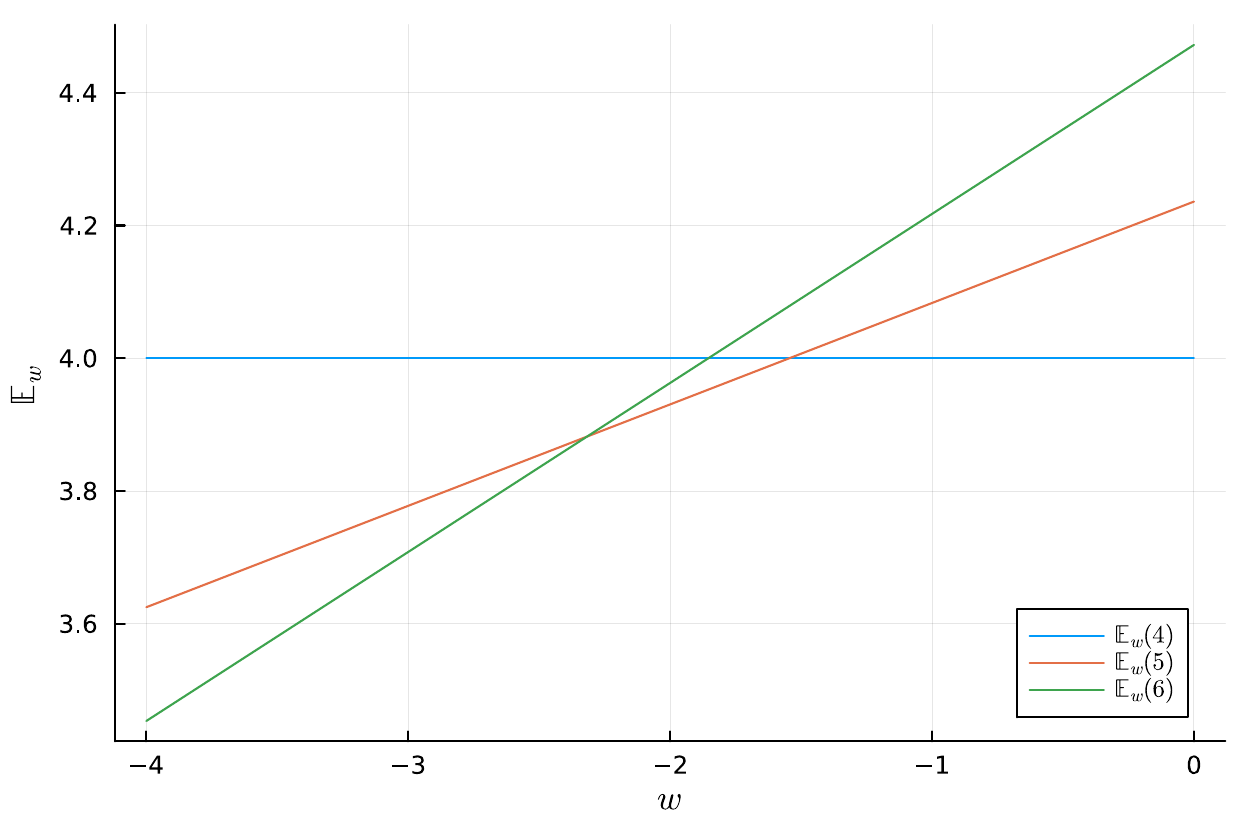}
\caption{The expected reward $\expec_w(n)$ as a function of the penalty $w$ for the composite state space $\state= \poly_5 \otimes \state^{cl}_2$  for $\dop_{\state} \leq n \leq \nsat_{\state}$. For this state space we have $\dop_{\state} = 4$ and $\nsat_{\state} = 6$ so it manifests super information storability.}
\label{fig:P5xS2}
\end{figure}

\begin{figure}
\centering
\includegraphics[width=0.8\textwidth]{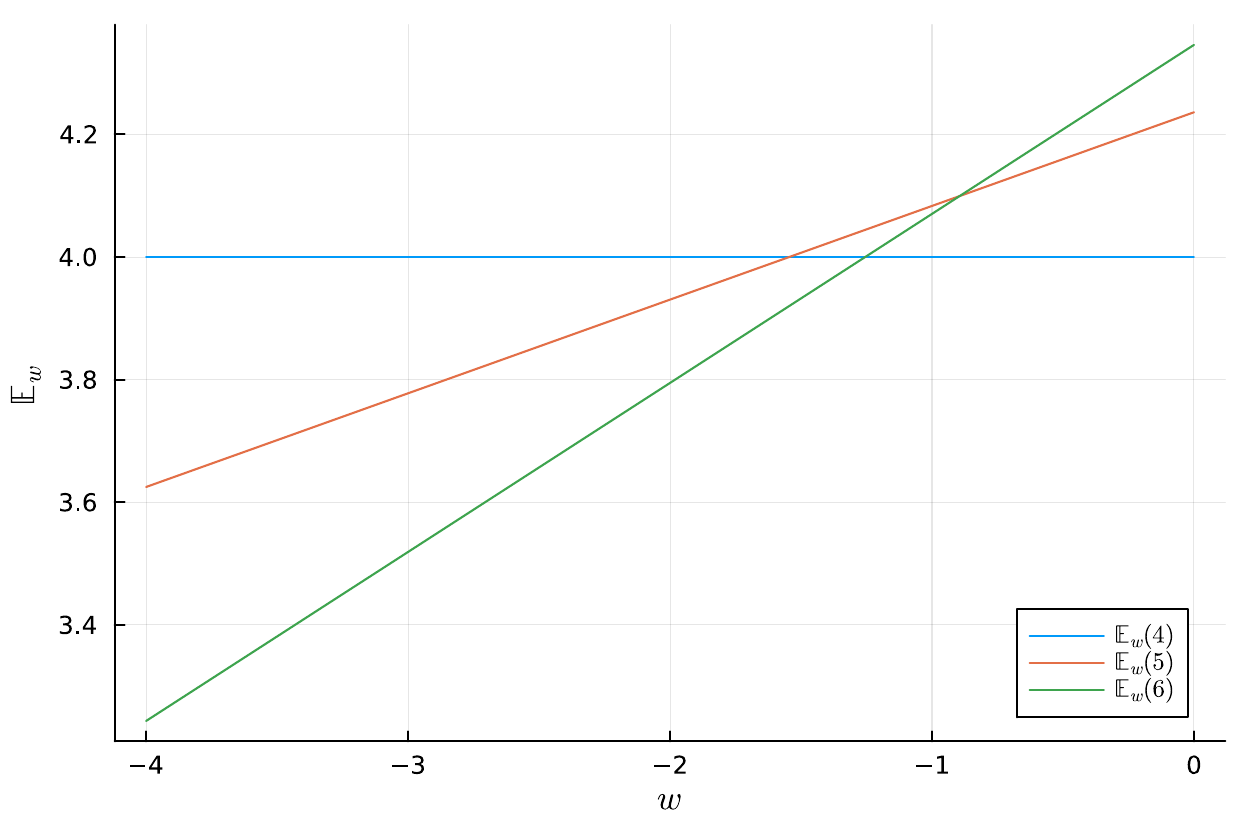}
\caption{The expected reward $\expec_w(n)$ as a function of the penalty $w$ for the direct sum state space $\state= \poly_5 \oplus \poly_7$ for $\dop_{\state} \leq n \leq \nsat_{\state}$. Also for this state space we have $\dop_{\state} = 4$ and $\nsat_{\state} = 6$ so it manifests super information storability.}
\label{fig:P5+P7}
\end{figure}

 Above we have now considered if there is some better strategy than the perfect discrimination strategy. How about the optimal strategy? To this end let us start by comparing these super information storability strategies with different $n$. As explained above, the perfect discrimination strategy is optimal for all $n < \mdet_{\state}$. Also, it is clear that if $n \geq \nsat_{\state}$, then $\expec_w(n)$ monotonically decreases (see Eq. \eqref{eq:optimal-n}). What happens between for the expected reward between $\mdet_{\state} \leq n < \nsat_{\state}$? Is it monotonically increasing in all theories for all penalties or not so that the optimal strategy will always be achieved with $n = \nsat_{\state}$? We can show that the answer to the previous question is negative: While in Fig. \ref{fig:P5xS2} for $\state= \poly_5 \otimes \state^{cl}_2$ (a tensor product of pentagon and classical bit) the optimal solution jumps directly from the perfect distinguishability strategy to the maximal super information storability strategy with $n = \nsat_{\state}$ depending on the penalty, in Fig. \ref{fig:P5+P7} we see that for the state space $\state= \poly_5 \oplus \poly_7$ (direct sum of pentagon and heptagon, see \cite{HeLePl19} for the definition of a direct sum) there are choices of penalties for which the optimal strategy is a non-maximal super information storability strategy with $\mdet_{\state} \leq n < \nsat_{\state}$. This shows that the behaviour of the expected reward $\expec_w(n)$ in the interval $\mdet_{\state} \leq n < \nsat_{\state}$ depends on the theory and thus the optimal strategy may not be as straightforward to solve in general. We leave the analytical optimal solution as an open problem.


\section{Conclusions} 
In this work we have introduced the party game of the year, namely the information-storability game: your task is to freely choose a set of states from which a state is chosen at random and by measuring the state you must guess which state was chosen. If the guess is correct you receive as many utiles as the number of states in the set that you choose and if you guess incorrectly you lose a fixed number of utiles. By formalizing the game we see that two key concepts in the solution are the notions of the information storability of the theory and nondegradability of the used set of states and the used measurement. We have developed the theory of both of these concepts by giving conditions on when the set of states or the measurement is degradable, and how information storability can be calculated in specific theories.

By using these concepts we have explored the optimal solutions in three different cases: i) game with no penalty, where the optimal strategy is always attained by using the strategy that maximizes the information storability and depending if the theory has super information storability or not this optimal strategy may or may not be the perfect discrimination strategy, ii) game with penalty in a theory without super information storability, where the optimal strategy is just the maximal perfect discrimination strategy, and iii) game with penalty in a theory with super information storability, where the optimal strategy will be either some super information storability strategy or the perfect distingishability strategy depending on the value of the penalty. What follows from our analysis is one of our most important observations that the game can be used as an operational test for super information storability.

Interestingly, in the last case iii) we have demonstrated that depending on the theory and the penalty neither the perfect discrimination strategy or the maximal super information storability strategy may not be the optimal one but some other super information strategy may perform better. We leave it as an open question to solve the optimal strategy analytically in this case but for now the lack of perfect knowledge of the solution adds to the fun of the game even for the creators!


\section*{Acknowledgments}
TH acknowledges support from the Business Finland under the projects TORQS, Grant 8582/31/2022, BEQAHcreation, Grant 6956/31/2023, and from the Academy of Finland under the project DEQSE, Grant 349945.

LL acknowledges support from the European Union’s Horizon 2020 Research and Innovation Programme under the Programme SASPRO 2 COFUND Marie Sklodowska-Curie grant agreement No. 945478 as well as from projects APVV-22-0570 (DeQHOST) and VEGA 2/0183/21 (DESCOM).

MP acknowledges support from the Deutsche Forschungsgemeinschaft (DFG, German Research Foundation, project numbers 447948357 and 440958198), the Sino-German Center for Research Promotion (Project M-0294), the German Ministry of Education and Research (Project QuKuK, BMBF Grant No. 16KIS1618K), the DAAD, and the Alexander von Humboldt Foundation.

\bibliographystyle{unsrt}

\begin{thebibliography}{10}

\bibitem{PSAQT82}
A.S. Holevo.
\newblock {\em Probabilistic and Statistical Aspects of Quantum Theory}.
\newblock North-Holland Publishing Co., Amsterdam, 1982.

\bibitem{SSQT01}
A.S. Holevo.
\newblock {\em Statistical Structure of Quantum Theory}.
\newblock Springer-Verlag, Berlin, 2001.

\bibitem{QSCI12}
A.~S. Holevo.
\newblock {\em Quantum systems, Channels, Information: a Mathematical
  Introduction}.
\newblock Walter de Gruyter GmbH, Berlin, 2012.

\bibitem{Holevo73b}
A.S. Holevo.
\newblock Bounds for the quantity of information transmitted by a quantum
  communication channel.
\newblock {\em Probl. Peredachi Inf.}, 9:3--11, 1973.

\bibitem{MaKi18}
K.~Matsumoto and G.~Kimura.
\newblock Information storing yields a point-asymmetry of state space in
  general probabilistic theories.
\newblock arXiv:1802.01162 [quant-ph], 2018.

\bibitem{LamiThesis}
L.~Lami.
\newblock Non-classical correlations in quantum mechanics and beyond.
\newblock {\em PhD thesis. arXiv preprint arXiv:1803.02902}, 2018.

\bibitem{Plavala23}
M.~Plávala.
\newblock General probabilistic theories: An introduction.
\newblock {\em Phys. Rep.}, 1033:1--64, 2023.

\bibitem{LeppajarviThesis}
L.~Lepp\"{a}j\"{a}rvi.
\newblock Measurement simulability and incompatibility in quantum theory and
  other operational theories.
\newblock {\em PhD thesis. arXiv preprint arXiv:2106.03588}, 2021.

\bibitem{KiNuIm10}
G.~Kimura, K.~Nuida, and H.~Imai.
\newblock Distinguishability measures and entropies for general probabilistic
  theories.
\newblock {\em Rep. Math. Phys.}, 66:175--206, 2010.

\bibitem{QPSI10}
B.~Schumacher and M.~Westmoreland.
\newblock {\em {Q}uantum {P}rocesses, {S}ystems, and {I}nformation}.
\newblock Cambridge University Press, 2010.

\bibitem{oszmaniec2017simulating}
M.~Oszmaniec, L.~Guerini, P.~Wittek, and A.~Ac\'{\i}n.
\newblock Simulating positive-operator-valued measures with projective
  measurements.
\newblock {\em Phys. Rev. Lett.}, 119:190501, 2017.

\bibitem{guerini2017operational}
L.~Guerini, J.~Bavaresco, M.~T. Cunha, and A.~Ac\'{\i}n.
\newblock Operational framework for quantum measurement simulability.
\newblock {\em J. Math. Phys.}, 58:092102, 2017.

\bibitem{FiHeLe18}
S.N. Filippov, T.~Heinosaari, and L.~Lepp\"aj\"arvi.
\newblock Simulability of observables in general probabilistic theories.
\newblock {\em Phys. Rev. A}, 97:062102, 2018.

\bibitem{LaPaWi18}
L.~Lami, C.~Palazuelos, and A.~Winter.
\newblock Ultimate data hiding in quantum mechanics and beyond.
\newblock {\em Commun. Math. Phys.}, 361:661--708, 2018.

\bibitem{BlJeNe22}
A.~Bluhm, A.~Jen{\v{c}}ov{\'a}, and I.~Nechita.
\newblock Incompatibility in general probabilistic theories, generalized
  spectrahedra, and tensor norms.
\newblock {\em Commun. Math. Phys.}, 393:1125--1198, 2022.

\bibitem{BaKuMoHi97}
M.~Ban, K.~Kurokawa, R.~Momose, and O.~Hirota.
\newblock Optimum measurements for discrimination among symmetric quantum
  states and parameter estimation.
\newblock {\em Int. J. Theor. Phys.}, 36:1269--1288, 1997.

\bibitem{HeLe22}
T.~Heinosaari and L.~Leppäjärvi.
\newblock Random access test as an identifier of nonclassicality*.
\newblock {\em J. Phys. A: Math. Theor.}, 55:174003, 2022.

\bibitem{JaGoBaBr11}
P.~Janotta, C.~Gogolin, J.~Barrett, and N.~Brunner.
\newblock Limits on nonlocal correlations from the structure of the local state
  space.
\newblock {\em New J. Phys.}, 13:063024, 2011.

\bibitem{HeLePl19}
T.~Heinosaari, L.~Lepp\"aj\"arvi, and M.~Pl\'avala.
\newblock No-free-information principle in general probabilistic theories.
\newblock {\em Quantum}, 3:157, 2019.

\end{thebibliography}

\appendix
\section{About the definitions of degradability}\label{appendix-degradability}
In Sec. \ref{sec:encoding-decoding} we introduced the concept of degradable states and measurements and linked them to the information storing game. However, although the definitions might seem intuitive in the context of the game, they do not capture the most general way of how to obtain the nondegradable states/measurements from the degradable ones. These more general processes of obtaining sets of states from other sets of states and measurements from some other measurement are called pre- and postprocessing, respectively. 

\begin{definition}
A finite set of states $\sY = \{t_j\}_{j=1}^m \subset \state$ is a \emph{preprocessing} of a finite set of states $\sX = \{s_i\}_{i=1}^n \subset \state$, denoted by $\sY \preceq \sX$, if there exists a row-stochastic matrix\footnotemark{} $C= \{C_{ji}\}_{j \in [m],i \in [n]}$ such that $t_j = \sum_{i=1}^n C_{ji} s_i$ for all $j \in [m]$.
\end{definition}

The interpretation of preprocessing follows from the stochasticity of the preprocessing matrix; namely, the preprocessd states are nothing but convex combinations of other states. Since the state space is convex it is clear that by taking any preprocessing of any set of states leads to another set of states. 

\begin{definition}
A measurement $\oB \in \obs(\state)$ with $n_{\oB}$ outcomes is a \emph{postprocessing} of a measurement $\oA \in \obs(\state)$ with $n_{\oA}$ outcomes, denoted by $\oB \preceq \oA$ if there exists a row-stochastic matrix\footnotemark[\value{footnote}] $\nu=(\nu_{ij})_{i \in [n_{\oA} ], j \in [n_{\oB} ]}$ such that $\oB_y = \sum_{i =1}^{n_{\oA}} \nu_{ij} \oA_i$ for all $j \in [n_{\oB}]$.
\end{definition}
\footnotetext{An $n \times m$-matrix $M=(M_{ij})_{i \in [n], j \in [m]}$ is row-stochastic if $M_{ij} \geq 0$ for all $i \in [n]$, $j \in [m]$ and $\sum_{j=1}^m M_{ij} =1$ for all $i \in [n]$.}

The interpretation of postprocessing is as follows: given a postprocessing matrix $\nu$ the matrix element $\nu_{ij}$ describes the transition probability of an outcome $i$ to be transformed into outcome $j$. Then the stochasticity of the postprocessing matrix guarantees that the postprocessed measurement remains a valid measurement.

Next we will see that even if we use pre- and postprocessings in the definitions of degradability they will naturally reduce to the ones presented in Sec. \ref{sec:encoding-decoding}.

\subsubsection{Degradability of states defined via preprocessing}

First we note that when $\sY  = \{t_j\}_{j=1}^m \preceq \sX = \{s_i\}_{i=1}^n$ via some preprocessing matric $C$ which takes only values $C_{ji} \in \{0,1\}$ we see that then for all $j \in [m]$ there exists an index $i_j \in [n]$ such that $t_j =s_{i_j}$. Thus, taking preprocessings of a set of states is a natural generalization of taking subsets of states. For preprocessing to be useful in the context of degradability of states we need to see how the encoding power of states behaves under preprocessing.

\begin{proposition}
Let $\sY, \sX \subset \state$ be finite subsets of states such that $\sY \preceq \sX$. Then
\begin{equation}
\mmax{\sY}\leq \mmax{\sX} \, .
\end{equation}
\end{proposition}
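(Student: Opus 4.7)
The plan is to show the reverse inequality by constructing, from any candidate decoding measurement for $\sY$, an equally good decoding measurement for $\sX$. So fix an $m$-outcome measurement $\oB \in \obs_m(\state)$ and write $t_j = \sum_{i=1}^n C_{ji}s_i$ using the row-stochastic matrix $C$ witnessing $\sY \preceq \sX$. The whole trick will be to transpose the roles: the same matrix $C$ that combines the states of $\sX$ into the states of $\sY$ will also combine the effects of $\oB$ into effects of a new $n$-outcome measurement $\oA$ on $\state$.

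Concretely, define $\oA_i := \sum_{j=1}^m C_{ji}\,\oB_j$ for each $i \in [n]$. Positivity is immediate since $C_{ji}\geq 0$ and each $\oB_j$ is an effect. Normalization is where the row-stochasticity of $C$ is used in exactly the direction we need: $\sum_{i=1}^n \oA_i = \sum_{j=1}^m \bigl(\sum_{i=1}^n C_{ji}\bigr)\oB_j = \sum_{j=1}^m \oB_j = u$, because the rows of $C$ (indexed by $j$) sum to $1$. Hence $\oA \in \obs_n(\state)$.

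Now compute the encoding performance of $\oA$ on $\sX$ by swapping the order of summation and using affinity of the effects:
\begin{equation*}
\sum_{i=1}^n \oA_i(s_i) = \sum_{i=1}^n \sum_{j=1}^m C_{ji}\,\oB_j(s_i) = \sum_{j=1}^m \oB_j\!\left(\sum_{i=1}^n C_{ji}\, s_i\right) = \sum_{j=1}^m \oB_j(t_j).
\end{equation*}
Thus $\sum_{j} \oB_j(t_j) = \sum_i \oA_i(s_i) \leq \mmax{\sX}$. Taking the supremum over $\oB \in \obs_m(\state)$ yields $\mmax{\sY}\leq \mmax{\sX}$, as desired.

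There is no real obstacle here beyond getting the index convention right: the apparent puzzle of producing an $n$-outcome measurement out of an $m$-outcome one (with possibly $m < n$) dissolves once one notices that the stochasticity condition on $C$ used to form the states $t_j$ of $\sY$ is precisely the condition that, when $C$ is applied in the transposed direction to the effects of $\oB$, yields a normalized $n$-outcome measurement. The resulting inequality is in fact an equality for this particular $\oA$, which explains why no slack is lost in the construction.
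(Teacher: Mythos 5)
Your proof is correct and follows essentially the same route as the paper's: both transpose the row-stochastic matrix $C$ to turn a decoding measurement for $\sY$ into the postprocessed measurement $\oA_i=\sum_j C_{ji}\oB_j$ for $\sX$, giving $\sum_j\oB_j(t_j)=\sum_i\oA_i(s_i)\leq\mmax{\sX}$. The only difference is cosmetic: you verify explicitly that $\oA$ is a valid measurement and take the supremum over all $\oB$ at the end, whereas the paper starts from an optimal $\oB$ and cites the postprocessing structure.
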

\begin{proof}
Let $\sY= \{t_j\}_{j=1}^m$ and $\sX = \{s_i\}_{i=1}^n$ be subsets of $m$ and $n$ states respectively. Since $\sY \preceq \sX$, there exists a stochastic matrix $C= \{C_{ji}\}_{j \in [m],i \in [n]}$ such that $t_j = \sum_{i=1}^m C_{ji} s_i$ for all $j \in [m]$. Let $\oB$ be some $m$-outcome measurement that maximizes $\mmax{\sY}$. Then
\begin{align*}
    \mmax{\sY} &= \sum_{j=1}^m \oB_j(t_j) = \sum_{j=1}^m \oB_j\left( \sum_{i =1}^n C_{ji}s_i \right) \\
    &= \sum_{i =1}^n \left( \sum_{j=1}^m C_{ji} \oB_j \right)(s_i) = \sum_{i=1}^n \tilde{\oB}_i (s_i) \\
    & \leq \max_{\oA} \sum_{i=1}^{n} \oA_i(s_i) = \mmax{\sX},
\end{align*}
where $\tilde{\oB}$ is a postprocessing of $\oB$ via the matrix $C$. 
\end{proof}

We thus see that $\mu_{\max}$ is a monotone for the preprocessing order of states, and in particular thus $\is{\state}$ can be always attained by some set of pure states. Now in the spirit of Def. \ref{def:degradable-states}, for a more general notion of degradibility we are looking for preprocessings of states that preserve the value of $\mu_{max}$.

\begin{definition}
A finite set $\sX\subset\state$ of $n$ states is \emph{pp-degradable} if there exists another finite set $\sY\subset\state$ of $m < n$ states that is a preprocessing of $\sX$, i.e., $\sY \preceq \sX$, such that 
\begin{equation}
\mmax{\sY} = \mmax{\sX} \, .
\end{equation}
Otherwise $\sX$ is \emph{pp-nondegradable}. 
\end{definition}

Now we can show that even though we are using a seemingly more general definition via preprocessings for pp-degradability, it still reduces to the original notion of degradavility.

\begin{proposition}
    A finite set $\sX \subset \state$ is pp-degradable if and only if it is degradable.
\end{proposition}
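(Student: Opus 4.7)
The plan is to prove the two directions separately, with the easy one being that ordinary degradability is a special case of pp-degradability.

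\textbf{Easy direction (degradable $\Rightarrow$ pp-degradable).} If $\sY\subsetneq\sX$ with $\mmax{\sY}=\mmax{\sX}$, write $\sX=\{s_i\}_{i=1}^n$ and $\sY=\{s_{i_j}\}_{j=1}^m$ for some injection $j\mapsto i_j$. Define the row-stochastic matrix $C$ by $C_{ji}=1$ if $i=i_j$ and $C_{ji}=0$ otherwise; then $t_j=\sum_i C_{ji}s_i=s_{i_j}$, so $\sY\preceq\sX$, and pp-degradability follows.

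\textbf{Main direction (pp-degradable $\Rightarrow$ degradable).} Suppose $\sY=\{t_j\}_{j=1}^m\preceq\sX=\{s_i\}_{i=1}^n$ with $m<n$, via a row-stochastic matrix $C$, and $\mmax{\sY}=\mmax{\sX}$. Let $\oB\in\obs_m(\state)$ attain $\mmax{\sY}$, so
\begin{equation*}
\mmax{\sX}=\mmax{\sY}=\sum_{j=1}^m \oB_j(t_j)=\sum_{j=1}^m\sum_{i=1}^n C_{ji}\,\oB_j(s_i)\leq \sum_{j=1}^m\max_{i\in[n]}\oB_j(s_i),
\end{equation*}
because each inner sum is a convex combination of the values $\oB_j(s_i)$. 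For each $j\in[m]$ pick $i_j\in\arg\max_{i}\oB_j(s_i)$, and set $I=\{i_j:j\in[m]\}\subset[n]$ and $\sZ=\{s_i:i\in I\}$. Since $|I|\leq m<n$, $\sZ$ is a proper subset of $\sX$.

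The key construction is to merge outcomes of $\oB$ according to the partition $\{J_i\}_{i\in I}$ of $[m]$ defined by $J_i=\{j:i_j=i\}$: define $\oA\in\obs_{|I|}(\state)$ indexed by $I$ via $\oA_i=\sum_{j\in J_i}\oB_j$. This is a valid measurement since $\sum_{i\in I}\oA_i=\sum_{j=1}^m\oB_j=u$. Evaluating the encoding power,
\begin{equation*}
\mmax{\sZ}\geq \sum_{i\in I}\oA_i(s_i)=\sum_{i\in I}\sum_{j\in J_i}\oB_j(s_i)=\sum_{j=1}^m\oB_j(s_{i_j})\geq \mmax{\sX},
\end{equation*}
where the last inequality uses the display above and the fact that $\oB_j(s_{i_j})=\max_i\oB_j(s_i)$. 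Combined with the already-proven monotonicity $\mmax{\sZ}\leq\mmax{\sX}$ (since $\sZ\subset\sX$ is itself a $\{0,1\}$-preprocessing of $\sX$), we get $\mmax{\sZ}=\mmax{\sX}$, so $\sX$ is degradable.

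The main obstacle is the second direction, but it is really just a matter of making two observations: that the optimal measurement's expected score on the preprocessed set is bounded by the sum of the per-outcome maxima over $\sX$, and that merging the outcomes of $\oB$ according to where those maxima are attained produces a legitimate measurement on the relevant subset of $\sX$. Once those are in place, the proof collapses to a two-line inequality chain together with the monotonicity already established.
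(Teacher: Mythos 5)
Your proof is correct and follows essentially the same route as the paper's: both arguments bound $\mmax{\sX}$ by $\sum_j \sup_{s\in\sX}\oB_j(s)$ for a measurement $\oB$ optimal on the preprocessed set, and then pass to the proper subset of $\sX$ obtained by selecting one maximizing state of $\oB_j$ per outcome $j$. The only (cosmetic) difference is that where the paper certifies $\mmax{\sY}=\mmax{\sX}$ by invoking Lemma 1 together with an equality analysis of the resulting chain, you inline that step by explicitly merging the outcomes of $\oB$ along the fibres $J_i$ to build a measurement on the subset --- which is the same merging construction the paper uses inside its proof of Lemma 1.
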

\begin{proof}
First, as noted above, having $\sY$ to be a proper subset of $\sX$ is a special case of preprocessing so that if $\sX$ is degradable, then it is also pp-degradable. Next we will show that pp-degradability of states reduces to degradability.  Let $\sX= \{s_i\}_{i=1}^n$ be a pp-degradable set of states. It turns out that we can choose the degrading set $\sY \preceq \sX$ to not be just any preprocessing of $\sX$ but a deterministic preprocessing where all the elements of the preprocessing matrix are either $0$ or $1$: Let $\sY' = \{t_j\}_{j=1}^m$ be some set of states such that  $\sY' \preceq \sX$ via some preprocessing matrix $C$ such that $\mmax{\sY'} = \mmax{\sX}$ for some $m < n$. Let $\oB$ be some $m$-outcome measurement that maximizes $\mmax{\sY'}$. We now see by using Lemma \ref{lemma:mmax-lmax-subset} that
\begin{align*}
    \mmax{\sY'} &= \sum_{j=1}^m \oB_j(t_j) = \sum_{j=1}^m \oB_j \left( \sum_{i=1}^n C_{ji}s_i \right) = \sum_{j=1}^m \sum_{i=1}^n C_{ji} \oB_j(s_i) \\
    &\leq \sum_{j=1}^m \sum_{i=1}^n C_{ji} \sup_{s \in \sX} \oB_j(s) = \sum_{j=1}^m \sup_{s \in \sX} \oB_j(s) = \lmax{\oB| \sX} \\
    &\leq \sup_{\oB'}  \lmax{\oB'| \sX} = \mmax{\sX} =  \mmax{\sY'} \,.
\end{align*}

Thus, we  have that $\sum_{j=1}^m \oB_j \left( \sum_{i=1}^n C_{ji}s_i \right)= \sum_{j=1}^m  \sup_{s \in \sX} \oB_j(s)$. Now, if we denote the convex hull of $\sX$ by $\conv(\sX)$, we have that $\sup_{s \in \sX} \oB_j(s) = \sup_{s \in \conv(\sX)} \oB_j(s)$ for all $j \in [m]$ because $\conv(\sX)$ is compact and because $\sX$ contains all the extreme points of $\conv(\sX)$. We can thus conclude that for all $j \in [m]$ we must have that $\oB_j(s_i) = \sup_{s \in \sX} \oB_j(s)$ for all $i \in [n]$ such that $C_{ji} \neq 0$. Furthermore, since $C$ is a stochastic matrix we know that for all $j \in [m]$ there exists at least one index $i_j \in [n]$ such that $C_{ji} \neq 0$. If we now take $\sY = \{s_{i_j}\}_{j=1}^m$ we see that $\sY \preceq \sX$ where the preprocessing is now deterministic but nevertheless $\mmax{\sY} = \mmax{\sX}$. From the requirement $m < n$ it follows that actually $\sY \subset \sX$.
\end{proof}

\subsubsection{Degradability of measurements defined via postprocessing}
Similarly as taking preprocessings of states can be seen as a probabilistic generalization of just taking subsets of states, taking postprocessings of a measurement probabilistically generalizes merging of the measurement outcomes. Namely, if $\oB \preceq \oA$ for some postprocessing matrix $\nu$ with entries $\nu_{ij} \in \{0,1\}$, then there exists a function $f: [n_{\oA}] \to [n_{\oB}]$ such that $\nu_{ij} = \delta_{f(i),j}$ for all $i \in [n_{\oA}]$ and $j \in [n_{\oB}]$ so that $\oB_j = \sum_{i \in f^{-1}(j)} \oA_x$ for all $j \in [n_{\oB}]$. 

In \cite{HeLe22} it was shown that $\lambda_{\max}$ is a monotone on the postprocessing order but we repeat the proof here in the more general set dependent case for completeness.

\begin{proposition}\label{prop:lmax-pp}
    Let $\oA, \oB \in \obs(\state)$ be measurements such that $\oB \preceq \oA$. Then
    \begin{equation}\label{eq:lmax-pp}
        \lmax{\oB |  \sX} \leq \lmax{\oA | \sX} 
    \end{equation}
    for all $\sX \subset \state$.
\end{proposition}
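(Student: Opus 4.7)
The plan is to unfold the definitions and exchange a supremum with a convex combination, exploiting the row-stochasticity of the postprocessing matrix. First I would write the postprocessing assumption as $\oB_j = \sum_{i=1}^{n_{\oA}} \nu_{ij}\oA_i$ for all $j \in [n_{\oB}]$, where $\nu$ is row-stochastic, i.e., $\nu_{ij}\geq 0$ and $\sum_{j=1}^{n_{\oB}} \nu_{ij} = 1$ for every $i \in [n_{\oA}]$. Then by the definition of the set-dependent decoding power,
\begin{equation*}
\lmax{\oB \mid \sX} = \sum_{j=1}^{n_{\oB}} \sup_{s \in \sX} \oB_j(s) = \sum_{j=1}^{n_{\oB}} \sup_{s \in \sX} \sum_{i=1}^{n_{\oA}} \nu_{ij}\,\oA_i(s).
\end{equation*}

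Next I would apply subadditivity of the supremum (since each $\nu_{ij}\geq 0$, the supremum of a nonnegative linear combination is bounded by the same combination of the individual suprema), obtaining
\begin{equation*}
\lmax{\oB \mid \sX} \leq \sum_{j=1}^{n_{\oB}} \sum_{i=1}^{n_{\oA}} \nu_{ij}\,\sup_{s \in \sX}\oA_i(s).
\end{equation*}
Then I would switch the order of the two finite sums and use the row-stochasticity $\sum_{j} \nu_{ij} = 1$ to collapse the inner $j$-sum, arriving at
\begin{equation*}
\lmax{\oB \mid \sX} \leq \sum_{i=1}^{n_{\oA}} \sup_{s \in \sX}\oA_i(s) \cdot \sum_{j=1}^{n_{\oB}} \nu_{ij} = \sum_{i=1}^{n_{\oA}} \sup_{s \in \sX}\oA_i(s) = \lmax{\oA \mid \sX},
\end{equation*}
which is the claimed inequality.

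There is no real obstacle: the only substantive step is the sup-subadditivity, and this is legitimate precisely because the coefficients $\nu_{ij}$ are nonnegative. The row-stochasticity condition is exactly what is needed to eliminate the coefficients after switching the order of summation, so the proof is essentially a one-line computation once the definitions are unpacked. The same argument in fact gives the stronger statement $\lmax{\oB \mid Y} \leq \lmax{\oA \mid Y}$ for any subset $Y \subseteq \state$, so specializing to $Y=\sX$ yields the proposition and, by taking $\sX = \state$, recovers the monotonicity of $\lambda_{\max}$ under postprocessing stated in \cite{HeLe22}.
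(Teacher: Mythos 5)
Your proof is correct and is essentially identical to the paper's own argument: both unpack $\oB_j = \sum_i \nu_{ij}\oA_i$, bound the supremum of the nonnegative combination by the combination of the suprema, and then collapse the sum over $j$ using the row-stochasticity $\sum_j \nu_{ij}=1$. No gaps; nothing further to add.
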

\begin{proof}
    If $\oA$ has $n_{\oA}$ outcomes and $\oB$ has $n_{\oB}$ outcomes, then $\oB \preceq \oA$ means that there exists a row-stochastic matrix $\nu = \{\nu_{ij}\}_{i \in [n_{\oA}], j \in [n_{\oB}]}$ such that $\oB_j = \sum_{i \in [n_{\oA}]} \nu_{ij} \oA_j$ for all $j \in [n_{\oB}]$. Now
    \begin{align*}
        \lmax{\oB | \sX} &= \sum_{j =1}^{n_{\oB}} \sup_{s \in \sX} \oB_j(s) = \sum_{j =1}^{n_{\oB}} \sup_{s \in \sX} \left( \sum_{i =1}^{n_{\oA}} \nu_{ij}\oA_i \right)(s) \\
        & \leq \sum_{j =1}^{n_{\oB}} \sum_{i =1}^{n_{\oA}} \nu_{ij}  \sup_{s \in \sX}  \oA_i (s) =  \sum_{i =1}^{n_{\oA}}   \sup_{s \in \sX}  \oA_x (s) \\
        & = \lmax{\oA | \sX}
    \end{align*}
    for all $\sX \subset \state$.
\end{proof}

In a similar fashion one can also show the following for mixtures of measurements \cite{HeLe22}:
\begin{proposition}\label{prop:lmax-convex}
     Let $\oA^{(1)},, \ldots,\oA^{(m)}, \oA \in \obs_{n}(\state)$ be $n$-outcome measurements such that $\oA = \sum_{k=1}^m p_k \oA^{(k)}$ for some probability distribution $(p_k)_{k=1}^m$. Then
    \begin{equation}\label{eq:lmax-mixing}
         \lmax{\oA | \sX}  \leq \sum_{k=1}^m p_k \lmax{\oA^{(k)} | \sX} \leq \max_{k \in [m]} \lmax{\oA^{(k)} | \sX}
    \end{equation}
    for all $\sX \subset \state$.
\end{proposition}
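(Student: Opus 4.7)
The plan is to mirror the argument used for Proposition \ref{prop:lmax-pp}, exploiting the sub-additivity of the supremum under non-negative linear combinations. The proposition has two inequalities; the second one is elementary (a probability-weighted average is bounded above by the maximum), so the substantive content is concentrated in the first one.

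First I would unfold $\lmax{\oA | \sX}$ using its definition and substitute the convex decomposition of $\oA$, writing
\begin{equation*}
\lmax{\oA | \sX} = \sum_{i=1}^n \sup_{s \in \sX} \oA_i(s) = \sum_{i=1}^n \sup_{s \in \sX} \left( \sum_{k=1}^m p_k \oA^{(k)}_i \right)(s).
\end{equation*}
Since each $p_k \geq 0$, the standard inequality $\sup_s \sum_k p_k f_k(s) \leq \sum_k p_k \sup_s f_k(s)$ applies pointwise for each fixed $i$, and swapping the two finite sums then yields
\begin{equation*}
\lmax{\oA | \sX} \leq \sum_{i=1}^n \sum_{k=1}^m p_k \sup_{s \in \sX} \oA^{(k)}_i(s) = \sum_{k=1}^m p_k \sum_{i=1}^n \sup_{s \in \sX} \oA^{(k)}_i(s) = \sum_{k=1}^m p_k \lmax{\oA^{(k)} | \sX}.
\end{equation*}
This establishes the first inequality of \eqref{eq:lmax-mixing}. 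For the second inequality, I would simply observe that since $(p_k)_{k=1}^m$ is a probability distribution, the weighted average is bounded by the maximum: $\sum_{k=1}^m p_k \lmax{\oA^{(k)} | \sX} \leq \max_{k \in [m]} \lmax{\oA^{(k)} | \sX}$.

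There is essentially no obstacle here; the whole argument is structurally parallel to Proposition \ref{prop:lmax-pp}, with the role of the postprocessing matrix entries $\nu_{ij}$ replaced by the mixing weights $p_k$. The only minor point to verify is that the sub-additivity of $\sup$ used above genuinely requires the coefficients to be non-negative, which is automatic from the assumption that $(p_k)_{k=1}^m$ is a probability distribution.
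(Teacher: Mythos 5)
Your proposal is correct and follows essentially the same route as the paper's proof: expand the definition, apply sub-additivity of the supremum over the convex combination (using $p_k \geq 0$), interchange the finite sums, and bound the resulting weighted average by the maximum. Nothing is missing.
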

\begin{proof}
A straightforward calculation shows that
    \begin{align*}
        \lmax{\oA | \sX} &= \sum_{i=1}^n \sup_{s \in \sX} \oA_i(s) = \sum_{i=1}^n \sup_{s \in \sX} \left( \sum_{k=1}^m p_k \oA^{(k)}_i \right)(s)\\
        & \leq  \sum_{k=1}^m p_k  \sum_{i =1}^n \sup_{s \in \sX} \oA^{(k)}(s) = \sum_{k=1}^m p_k \lmax{\oA^{(k)}| \sX} \\
        & \leq \max_{k \in [m]} \lmax{\oA^{(k)}| \sX} 
    \end{align*}
    for all $\sX \subset \state$.
\end{proof}

In particular, since it is known that indecomposable measurement are exactly the maximal ones with respect to the postprocessing preorder on the set of measurements \cite[Prop. 1]{FiHeLe18}, we get by combining the two previous results that $\is{\state}$ can be always attained by some extreme indecomposable measurements (also known as simulation irreducible measurements).

Now the original definition of degradability Def. \ref{def:degradable-measurement} can be restated in terms of postprocessing as follows:

\begin{definition}
A measurement $\oA$ with $n_\A$ outcomes is \emph{pp-degradable} if there exists a measurement $\oB \preceq \oA$ with $n_\B<n_\A$ outcomes such that 
\begin{equation}
\lmax{\oA} = \lmax{\oB}
\end{equation}
If $\oA$ is not pp-degradable then it is \emph{pp-nondegradable}. 
\end{definition}

Similarly to pp-degradable states, we can now show that also in the case of measurements pp-degradablity reduces to degradability.

\begin{proposition}
    A measurement $\oA \in \obs(\state)$ is pp-degradable if and only if it is degradable.
\end{proposition}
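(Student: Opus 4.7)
The plan is to prove the two directions of the equivalence separately. The direction ``degradable $\Rightarrow$ pp-degradable'' is essentially definitional: if $\oB$ is obtained from $\oA$ by merging outcomes via a function $f:[n]\to[m]$, then the $0/1$ row-stochastic matrix $\nu_{ij}=\delta_{f(i),j}$ realizes $\oB\preceq\oA$, so the same witness measurement serves both notions.

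For the nontrivial direction, I would start from pp-degradability witnessed by some $\oB\preceq\oA$ with $\lmax{\oB}=\lmax{\oA}$, where $\oA$ has $n$ outcomes, $\oB$ has $m<n$ outcomes, and $\oB_j=\sum_i \nu_{ij}\oA_i$ for a row-stochastic matrix $\nu=(\nu_{ij})_{i\in[n],\,j\in[m]}$. The key step is the equality-case analysis of the inequality chain that appears in the proof of Prop.~\ref{prop:lmax-pp}, applied with $\sX=\state$:
\[
\lmax{\oB}=\sum_{j=1}^m\sup_{s\in\state}\sum_{i=1}^n \nu_{ij}\oA_i(s) \leq \sum_{j=1}^m\sum_{i=1}^n \nu_{ij}\sup_{s\in\state}\oA_i(s) = \lmax{\oA}.
\]
Because $\state$ is compact and the effects are continuous, equality at each $j$ forces the existence of a common maximizer $t_j\in\state$ satisfying $\oA_i(t_j)=\sup_{s\in\state}\oA_i(s)$ for every $i$ in the support $T_j:=\{i:\nu_{ij}>0\}$. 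In the notation of Prop.~\ref{prop:nondeg-measurement} this reads $t_j\in\bigcap_{i\in T_j} S_i$.

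A pigeonhole argument on $\nu$ then produces an actual merging: each of the $n$ rows sums to $1$ and so contains at least one positive entry, which gives at least $n$ positive entries in $\nu$ in total; with only $m<n$ columns available, some column $j^*$ must contain at least two positive entries, say indexed by $i\neq i'$. Then $t_{j^*}\in S_i\cap S_{i'}$, and the construction at the start of the proof of Prop.~\ref{prop:nondeg-measurement} (which merges two outcomes sharing a maximizer) yields a measurement with one fewer outcome than $\oA$ and the same $\lambda_{\max}$, establishing degradability of $\oA$ in the sense of Def.~\ref{def:degradable-measurement}. The only potentially delicate point is the equality-case analysis of the $\sup$-of-sums inequality, but compactness plus continuity make it routine; once the common maximizers $t_j$ are in hand, what remains is pure combinatorics together with a direct appeal to Prop.~\ref{prop:nondeg-measurement}.
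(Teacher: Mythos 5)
Your proof is correct, but it reaches the conclusion by a genuinely different route than the paper. The paper decomposes the row-stochastic postprocessing matrix $\nu'$ into a convex combination of deterministic ($0/1$-valued) postprocessings, and then uses the monotonicity of $\lambda_{\max}$ under postprocessing together with its behaviour under convex mixtures (Propositions \ref{prop:lmax-pp} and \ref{prop:lmax-convex}) to squeeze the inequality chain $\lmax{\oA}\geq\max_k\lmax{\oB^{(k)}}\geq\lmax{\oB'}=\lmax{\oA}$ into equalities; the optimal deterministic branch $\oB^{(k_0)}$ then \emph{is} a merging of $\oA$ into $m<n$ outcomes with the same $\lambda_{\max}$. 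You instead perform an equality-case analysis of the single inequality in Proposition \ref{prop:lmax-pp}: compactness of $\state$ gives a maximizer $t_j$ for each $\oB_j$, term-by-term equality forces $\oA_i(t_j)=\sup_s\oA_i(s)$ for every $i$ in the support of column $j$, and a pigeonhole count of the positive entries of $\nu$ (at least $n$ entries in $m<n$ columns) produces two outcomes $i\neq i'$ with $S_i\cap S_{i'}\neq\emptyset$, at which point Proposition \ref{prop:nondeg-measurement} finishes the job by merging just that one pair. Both arguments are sound. The paper's route is more structural and directly exhibits a degrading merging onto the same number of outcomes $m$ as the witness $\oB'$, reusing the convex-decomposition machinery that appears throughout the appendix; your route is more elementary in that it bypasses the decomposition of $\nu$ entirely and makes explicit the link between pp-degradability and the combinatorial characterization of degradability in terms of the maximizer sets $S_i$, at the cost of only reducing the outcome count by one (which of course suffices for Definition \ref{def:degradable-measurement}). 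The one step you should spell out is that equality of the summed inequality forces equality for each $j$ separately (each summand already satisfies the same-direction inequality), but as you note this is routine.
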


\begin{proof}
As was pointed out above, if a measurement is degradable then it is also pp-degradable since merging of effects is a particular instance of postprocessing. Let now $\oA$ be a pp-degradable measurement with $n_{\oA}$ outcomes and let $\oB'$ be some measurement with $n_{\oB'} < n_{\oA}$ outcomes such that $\oB' \preceq \oA$ and $\lmax{\oB'} = \lmax{\oA}$. Thus, then there exists a postprocessing $\nu' = \{\nu'_{ij}\}_{i \in n_{\oA}, j \in n_{\oB'}}$ such that $\oB'_{j} =  \sum_{i \in [n_{\oA}]} \nu'_{ij} \oA_i$ for all $j \in [n_{\oB'}]$. Now $\nu'$ can be written as a convex mixture $\nu' = \sum_{k=1}^m p_k \nu^{(k)}$ of deterministic postprocessings $\nu^{(k)} =  \{\nu^{(k)}_{ij}\}_{i \in n_{\oA}, j \in n_{\oB'}}$ that satisfy $\nu^{(k)}_{ij} \in \{0,1\}$ for all $i \in [n_{\oA}]$, $j \in [n_{\oB'}]$ and $k \in [m]$ for some probability distribution $(p_k)_{k=1}^m$ for some $m \in \nat$. If we now define measurements $\oB^{(k)} \preceq \oA$ with $n_{\oB'}$ outcomes as $\oB^{(k)}_{j} =  \sum_{i=1}^{n_{\oA}} \nu^{(k)}_{ij} \oA_i$ for all $j \in [n_{\oB'}]$ and $k \in [m]$, then it holds that $\oB' = \sum_{k=1}^m p_k \oB^{(k)}$. From Propositions \ref{prop:lmax-pp} and \ref{prop:lmax-convex} it follows that 
\begin{equation*}
    \lmax{\oA} \geq \max_{k \in [m]} \lmax{\oB^{(k)}} \geq \lmax{\oB'} = \lmax{\oA}
\end{equation*}
from which we see that all the inequalities must actually be equalities. Let us choose $k_0 \in [m]$ such that $\lmax{\oB^{(k_0)} } = \max_{k \in [m]} \lmax{ \oB^{(k)} }$ and set $\oB :=\oB^{(k_0)}$. As can be seen above, then $\lmax{\oB} = \lmax{\oA}$ and since $n_{\oB} = n_{\oB'} < n_{\oA}$, this means that the deterministic postprocessing $\nu^{(k_0)}$ just represents the merging of (some of) the effects of $\oA$. Thus, this shows that $\oA$ is degradable.
\end{proof}

\section{Proof of Lemma \ref{lemma:mmax-lmax-subset}} \label{appendix:lemma-1}

\begin{lemma*}
For any finite subset of states $\sX \subset \state$ we have that 
\begin{equation}
\mmax{\sX} = \sup_{\oA} \lmax{\oA | \sX}
\end{equation}
and they both can maximized by the same measurement. 
\end{lemma*}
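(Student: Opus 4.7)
My plan is to prove the two inequalities $\mmax{\sX} \leq \sup_{\oA} \lmax{\oA | \sX}$ and $\mmax{\sX} \geq \sup_{\oA} \lmax{\oA | \sX}$ separately, and then observe that the same measurement realizes both.

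For the inequality $\mmax{\sX} \leq \sup_{\oA} \lmax{\oA | \sX}$, let $\sX = \{s_i\}_{i=1}^n$. For any $n$-outcome measurement $\oA \in \obs_n(\state)$ one trivially has $\oA_i(s_i) \leq \sup_{s \in \sX} \oA_i(s)$ for each $i$, so summing gives $\sum_{i=1}^n \oA_i(s_i) \leq \lmax{\oA | \sX}$. Taking the supremum over $\oA \in \obs_n(\state)$ on the left and enlarging the supremum on the right yields the desired inequality.

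The nontrivial direction is $\mmax{\sX} \geq \sup_{\oA} \lmax{\oA | \sX}$. Given any measurement $\oA$ with $m$ outcomes, I will build an $n$-outcome measurement $\oB$ that reaches $\lmax{\oA | \sX}$ inside the sum $\sum_i \oB_i(s_i)$. For each outcome $j \in [m]$ pick an index $i(j) \in [n]$ such that $\oA_j(s_{i(j)}) = \sup_{s \in \sX} \oA_j(s)$, partition $[m]$ according to the fibers of $i \colon [m] \to [n]$, and define
\begin{equation}
\oB_k := \sum_{j \in i^{-1}(k)} \oA_j, \qquad k \in [n],
\end{equation}
with the convention $\oB_k = 0$ when the fiber is empty. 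Then $\sum_{k=1}^n \oB_k = \sum_{j=1}^m \oA_j = u$, so $\oB \in \obs_n(\state)$, and
\begin{equation}
\sum_{k=1}^n \oB_k(s_k) = \sum_{k=1}^n \sum_{j \in i^{-1}(k)} \oA_j(s_k) = \sum_{j=1}^m \oA_j(s_{i(j)}) = \lmax{\oA | \sX}.
\end{equation}
Hence $\mmax{\sX} \geq \lmax{\oA | \sX}$, and taking the supremum over $\oA$ gives the claim.

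Finally, for the joint maximizer, recall that $\mmax{\sX}$ is attained by some $\oA^* \in \obs_n(\state)$ by compactness of $\obs_n(\state)$. Chaining the two inequalities above with $\oA^*$ in the middle gives
\begin{equation}
\mmax{\sX} = \sum_{i=1}^n \oA^*_i(s_i) \leq \lmax{\oA^* | \sX} \leq \sup_{\oA} \lmax{\oA | \sX} = \mmax{\sX},
\end{equation}
so every inequality is an equality and $\oA^*$ maximizes both quantities simultaneously. The only delicate step is the fiber construction above, but the stochasticity bookkeeping is immediate once the map $j \mapsto i(j)$ is fixed.
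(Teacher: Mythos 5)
Your proof is correct and follows essentially the same route as the paper's: the easy direction is the pointwise bound $\oA_i(s_i)\le\sup_{s\in\sX}\oA_i(s)$, and the nontrivial direction merges the outcomes of an arbitrary measurement according to which state of $\sX$ maximizes each effect, then chains the two inequalities through the maximizer $\oA^*$ of $\mmax{\sX}$ to get the common optimizer. The only cosmetic difference is that you index the merged measurement directly by $[n]$ (padding empty fibers with zero effects) so that $\lmax{\oA | \sX}$ is realized as a discrimination probability for $\sX$ itself, whereas the paper passes through a quotient outcome set and a subset $\tilde{\sX}\subseteq\sX$ with $\mmax{\tilde{\sX}}\le\mmax{\sX}$; both are valid.
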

\begin{proof}
Let $\sX= \{s_i\}_{i=1}^n$ and let $\oA$ be some $n$-outcome measurement that maximizes $\mmax{\sX}$. 
We have
\begin{align}\label{eq:mmax-lmax-subset-1}
    \mmax{\sX} &= \sum_{i=1}^n \oA_i(s_i) \leq \sum_{i=1}^n \sup_{s\in \sX} \oA_i(s) = \lmax{\oA | \sX} \leq \sup_{\oB} \lmax{\oB | \sX} \, .
\end{align}

On the other hand, let $\oB$ be some $m$-outcome measurement that satisfies $ \lmax{\oB | \sX} = \sup_{\oB'} \lmax{\oB' | \sX}$. For each $i \in [m]$ let $t_i \in \sX$ be some state in $\sX$ such that $\oB_i(t_i) = \sup_{s \in \sX} \oB_i(s)$ and denote $\sY = \{t_i\}_{i=1}^m$ so that $\lmax{\oB|\sX} = \lmax{\oB|\sY} = \sum_{i=1}^m \oB_i(t_i)$. It may happen that $t_i = t_j$ for some $i \neq j$ meaning that some of the effects are maximized by the same state on $\sX$. For that reason, let us define an equivalence relation $\sim$ on $[m]$ by setting $i \sim j$ if and only if $t_i = t_j$. We can now define a new measurement $\tilde{\oB}$ with outcome set $[m]/\sim$ consisting of the equivalence classes $[[i]]$ of elements $i \in [m]$ such that 
\begin{equation*}
    \tilde{\oB}_{[[i]]} = \sum_{j \in [[i]]} \oB_{j}
\end{equation*}
for all $[[i]] \in [m]/\sim$. Let us now denote by $\tilde{s}_{[[i]]} \in \sY$ the maximizing state of each $\tilde{\oB}_{[[i]]}$ in $\sY$ for all $[[i]] \in [m]/\sim$ so that $\tilde{\oB}_{[[i]]}(s_{[[i]]}) = \sup_{t \in \sY} \tilde{\oB}_{[[i]]}(t)$ for all $[[i]] \in [m]/\sim$ and let $\tilde{\sX} = \{\tilde{s}_{[[i]]}\}_{[[i]] \in [m]/\sim}$. By construction now $\tilde{\sX}$ consists of distinct states in $\sY$ and thus also in $\sX$ so that then $\mmax{\tilde{\sX}} \leq \mmax{\sX}$, and furthermore $\lmax{\oB | \sX} = \lmax{\tilde{\oB} | \tilde{\sX}}$. It now follows that
\begin{align}\label{eq:mmax-lmax-subset-2}
     \lmax{\oB | \sX} &= \lmax{\tilde{\oB} | \tilde{\sX}} = \sum_{[[i]]} \tilde{\oB}_{[[i]]}(\tilde{s}_{[[i]]}) \leq \mmax{\tilde{\sX}} \leq \mmax{\sX}\, .
\end{align}

Hence, from Eqs. \eqref{eq:mmax-lmax-subset-1} and \eqref{eq:mmax-lmax-subset-2} we can now conclude that $\mmax{\sX} = \sup_{\oA} \lmax{\oA | \sX}$. Furthermore, by combining Eq. \eqref{eq:mmax-lmax-subset-1} and \eqref{eq:mmax-lmax-subset-2} we see that $\lmax{\oA | \sX} = \lmax{\oB | \sX} = \sup_{\oB'} \lmax{\oB' | \sX}$ so that in fact both $\mmax{\sX}$ and $ \sup_{\oA} \lmax{\oA | \sX}$ can be maximized with the same measurement.
\end{proof}

\end{document}